\numberwithin{equation}{section}
\newtheorem{thm}{Theorem}[section]
\newtheorem{dfn}[thm]{Definition}
\newtheorem{lem}[thm]{Lemma}
\newtheorem{rem}[thm]{Remark}
\newtheorem{cor}[thm]{Corollary}
\newtheorem{asm}[thm]{Assumption}
\newtheorem{example}[thm]{Example}
\newenvironment{proof}{\noindent {\bf Proof.\/}}{$\qed$\vskip 0.1in}
\def\qed{ \hfill \vrule width.2cm height.2cm depth0cm\smallskip}
\begin{document}

\title{Duality and Convergence for
 Binomial Markets with Friction
 \thanks{Research supported by the
European Research Council Grant 228053-FiRM,
the Swiss Finance Institute and the ETH Foundation.
The authors would like to thank Prof. Kusuoka and Marcel Nutz
for insightful discussions.}
}

\date{June 9, 2011}
\author{
  Yan Dolinsky
  \thanks{
  ETH Zurich, Dept.\ of Mathematics,
  \texttt{yan.dolinsky@math.ethz.ch}
  }
  \and
  H.\ Mete Soner
  \thanks{
  ETH Zurich, Dept.\ of Mathematics,
  and Swiss Finance Institute, \texttt{mete.soner@math.ethz.ch}}
  }

\maketitle

\begin{abstract}
We prove limit theorems for the super-replication
cost of European options in a Binomial model
with friction.  The examples covered are markets
with proportional transaction costs and
the illiquid markets.
The dual representation for the super-replication
cost in these models are obtained and used to prove the limit theorems.
In particular, the existence of the liquidity premium
for the continuous time limit of the
model proposed in \cite{CJP} is proved.
Hence, this paper extends the
previous convergence result of
\cite{GS} to the general non-Markovian case.
Moreover, the special
case of small transaction
costs yields, in the continuous limit, the $G$-expectation
of Peng
as earlier proved by Kusuoka in \cite{K}.
\end{abstract}

{\small
\noindent \emph{Keywords:}
Super-replication, Liquidity, Binomial model, Limit theorems, $G$-expectation

\noindent \emph{AMS 2000 Subject Classifications:}
91B28, 60F05, 60H30

\noindent \emph{JEL Classifications:} G11, G13, D52

\section{Introduction}
\label{intro}
We consider a one-dimensional Binomial model
in which the size of the trade has an immediate but temporary
effect on the price of the asset. Indeed, let
$g(t,\nu)$ be the cost of trading $\nu$
shares at time $t$.  We simply assume that $g$
is adapted to the natural filtration
and it is convex in $\nu$ with $g(t,0)=0$.
In this generality this model corresponds to the
classical transaction cost model when
$g(t,\nu) = \lambda |\nu|$
with a given constant $\lambda >0$.  However,
it also covers the illiquidity model
considered in \cite{cer} and \cite{GS} which is
the Binomial version of the model introduced
by Cetin, Jarrow and Protter in \cite{CJP} for continuous time.
In this example, $g$ is twice differentiable at $\nu=0$.

In continuous time the super-replication cost of a European
option behaves quite differently depending on the structure of $g$.
In the case of proportional transaction costs (i.e. when
$g$ is non-differentiable at the origin),
the super-replication cost is very costly
as proved in \cite{SSC,LS,D1}.
In several papers \cite{BS,K} asymptotic
problems with vanishing transaction costs are
considered to obtain non-trivial pricing
equations. On the other hand,
when $g$ is differentiable then any continuous trading
strategy which has finite variation, has no liquidity cost.
Thus one may avoid the liquidity cost entirely
as shown in \cite{CJP} and also in \cite{BB}.
However, in \cite{CST} it
is shown that mild constraints on the admissible
strategies render these approximation inadmissible
and one has a liquidity premium.  This result
is further verified in \cite{GS} which derives
the same premium as the continuous
time limit of Binomial models. The
equation satisfied by this limit is
a nonlinear Black-Scholes equation
\begin{equation}
\label{e.bs}
-u_t(t,s)+ \frac{\sigma^2 s^2}{2} H(u_{ss}(t,s)) =0,
\quad
\forall\
t<T, s>0,
\end{equation}
where $t$ is time, $s$ is the current stock price
and  $H$ is a convex nonlinear function of the
second derivative
derived explicitly in \cite{CST,GS}. Since $H$ is convex,
the above equation
is the dynamic programming equation of a stochastic optimal
control problem.  Then this problem may be considered
as the dual of the original super-replication problem.

The proof
given in \cite{GS} depends on the homogenization
techniques for viscosity solutions.  Thus
it is limited to the Markovian claims.  Moreover,
the mentioned duality result is obtained
only through the partial differential equation
and not by a direct argument.

In this paper we extend the study of \cite{GS}
to non Markovian claims and to more general
liquidity functions $g$.
The model is again a  simple one dimensional
model with trading cost $g$.
In this formulation, the super-replication problem
is a convex optimization problem
and its dual can be derived by  the classical
theory.
This derivation is an advantage of the discrete model
as the derivation of the dual in continuous time
is essentially an open problem.  Although
 a new approach is now developed in \cite{STZ}.
The dual is an optimal control problem
in which the controller is allowed to choose different
probability measures.
We then use this dual representation to formally
identify the limit optimal control problem.
The dynamic programming equation
of this optimal control problem is
given by (\ref{e.bs}) in the Markov case.
This  representation also allows us to prove
 the continuous time limit.

Our approach is purely probabilistic and allows us to deal
with path dependent payoffs and path dependent
penalty functions $g$.
One of the key step is a construction of Kusuoka
given in the context of transaction costs.
Indeed, given a martingale $M$ on the Brownian probability
space whose volatility satisfies some regularity conditions,
Kusuoka in \cite{K} constructs
a sequence of martingales on the discrete probability
space $\{-1,1\}^\infty$ of a specific form
 which converge in law  to $M$.
 Moreover, the quadratic variation of $M$ is
 approximated through this powerful procedure
 of Kusuoka.
 This construction is
our main tool in proving the lower bound (i.e.,
existence of liquidity premium)
for the continuous time limit of the super-replication costs.
The upper bound follows from
compactness and two general lemmas (Lemmas \ref{convergence}
and \ref{randomization}).

As remarked before, the super-replication cost
can be quite costly in markets with transaction costs.  Therefore
if $g(t,\nu)= \lambda |\nu|$ and $\lambda>0$ is a given constant,
one obtains a trivial result in the continuous time limit.  So
we need to scale the proportionality constant $\lambda$
as the time discretization gets smaller.  Indeed, if in an $n$-step
model, we take $\lambda_n = \Lambda/\sqrt{n}$
then the limit problem is the uncertain volatility model
or equivalently $G$-expectation of Peng \cite{P}.
This is exactly the main result of Kusuoka in \cite{K}.
In fact, relatedly,
the authors in joint work with M. Nutz \cite{DNS}
provides a different discretization of
the $G$-expectation.

The paper is organized as following.
In the next section we introduce the setup.
In Section 3 we formulate the main results of this paper.
In Section 4 we prove Theorem \ref{thm2.1}, that is
a duality result for the super--replication prices
in the binomial models. The main tool that is used in this section
is the Kuhn-Tucker theory for convex optimization.
Theorem \ref{thm2.2}
which describes the
asymptotic behavior of the super--replication prices,
is proved in Section 5.
In Section 6 we state the main results from \cite{K},
which are used in this paper. In particular we give a short formulation of the main properties of
Kusuoka construction, which is the main tool in proving the lower bound (liquidity premium) of Theorem \ref{thm2.2}.
In Section 7 we derive auxiliary lemmas, Lemmas \ref{convergence}--\ref{randomization}
and  Lemma \ref{density} that are used
in the proof of Theorem \ref{thm2.2}.

\section{Preliminaries and the model}
\label{sec:2}
\setcounter{equation}{0}
Let $\Omega={\{-1,1\}}^\infty$ be the space of infinite sequences
$\omega=(\omega_1,\omega_2,...)$;
$\omega_i\in\{-1,1\}$ with the product probability
$\mathbb{Q}=\{\frac{1}{2},\frac{1}{2}\}^\infty$.
Define the canonical sequence of i.i.d.~random variables $\xi_1,\xi_2,...$ by
$$
\xi_i(\omega)=\omega_i, \qquad {i}\in\mathbb{N},
$$
and consider the natural filtration
$\mathcal{F}_k=\sigma{\{\xi_1,...,\xi_k}\}$, $k\geq 1$
and let $\mathcal{F}_0$ be trivial.

For any ${T}>0$ denote by $\mathcal{C}[0,{T}]$ the
space of all continuous functions on $[0,{T}]$
with the
uniform topology induced by the norm
 $\|y\|_\infty=\sup_{0\leq t\leq {T}}|y(t)|$.
Let
$F:\mathcal{C}[0,T]\rightarrow\mathbb{R}_{+}$
be a continuous map such that there are
constant $C, p>0$ for which
\begin{equation}\label{2.2}
\begin{split}
F(y)\leq C(1+\|y\|_\infty^p), \
\ \ \forall{y}\in \mathcal{C}[0,T].
\end{split}
\end{equation}
Without loss of generality we take $T=1$.

Next, we introduce a sequence of binomial models for which the
volatility of the stock price
is a constant $\sigma>0$ (which is independent of $n$). Namely,
for any $n$ consider the $n$--step
binomial model of a financial market which is active at times
$0,1/n,2/n,\ldots,1$. It consists of a savings account,
and of a stock. Without loss of generality (by discounting), we assume
that the savings account price is a constant which equals to 1. The stock price at time
$k/n$ is given by
\begin{equation}
\label{2.4}
S^{(n)}(k)=s_0\exp\left(\sigma\sqrt\frac{1}{n}\sum_{i=1}^{k} \xi_i\right),
 \ \ \ k=0,1,...,n
\end{equation}
where $s_0>0$ is the initial stock price.
For any $n\in\mathbb{N}$, let
$\mathcal{W}_n:\mathbb{R}^{n+1}\rightarrow\mathcal{C}[0,1]$
be the linear interpolation operator given by
$$
\mathcal{W}_{n}(y)(t):=
\left(\left[{nt}\right]+1-{nt}\right)y\left(\left[{nt}\right]\right)+
\left(nt-\left[nt\right]\right)y\left(\left[nt\right]+1\right),
 \ \ \forall{t}\in[0,1]
$$
where $y=\{y(k)\}_{k=0}^n\in\mathbb{R}^{n+1}$ and $[z]$ denotes the integer part of $z$.
We consider a (path dependent) European contingent
claim with maturity  $T=1$ and
a payoff given by
\begin{equation}\label{2.5}
F_n:=F\big(\mathcal{W}_n(S^{(n)})\big)
\end{equation}
where, by definition we consider, $\mathcal{W}_n(S^{(n)})$ as a random
element in $\mathcal{C}[0,1]$.

For future reference,  Let $\mathcal{C}^{+}[0,1]$ be the set of all strictly positive
continuous functions on $[0,1]$ with the uniform topology.
Then, in fact $\mathcal{W}_n(S^{(n)})$ is an element in
$\mathcal{C}^{+}[0,1]$

\subsection{Wealth dynamics and super-replication}

Next, we define the notion of a self financing portfolio
in these models. Fix $n\in\mathbb{N}$
and consider an $n$-step binomial model,
with a penalty function $g$.
We assume that this function represents
the cost of trading in this market.  We assume the following.

\begin{asm}
\label{a.g}
{\rm The} trading cost function
$$
g: [0,1] \times \mathcal{C}^+[0,1] \times \mathbb{R}
\to [0,\infty)
$$
{\rm
is assumed to be non-negative, adapted with $g(t,S,0)=0$.
Moreover we assume that $g(t,S,\cdot)$ is convex for every
$(t,S) \in [0,1]\times  \mathcal{C}[0,1]$.}
\end{asm}

In this simple setting, the adaptedness of $g$
simply means that $g(t,S,\nu)$ depends only on the restriction of
$S$ to the interval $[0,t]$, namely
$$
g(t,S,\nu)=g(t,\hat{S},\nu)\quad
{\mbox{whenever}}\quad
S(s)=\hat{S}(s)\ \forall \ s\leq t.
$$

A self financing
portfolio $\pi$ with an initial capital $x$
is a pair $\pi=(x,\{\gamma(k)\}_{k=0}^{n})$ where
$\gamma(0)=0$ and for any $k\geq 1$,
$\gamma(k)$ is a $\mathcal{F}_{k-1}$ measurable
random variable.  Here $\gamma(k)$ represents
the number of stocks that the investor holds
at the moment $(k/n)$, {\em before}
 a transfer is made at this time.
The portfolio value $ Y^\pi(k):= Y^\pi(k: g)$, of a
trading strategy $\pi$ is given by the
difference equation
\begin{equation}
\label{2.7}
Y^{\pi}(k+1)= Y^\pi(k)+\gamma(k+1)\left(S^{(n)}(k+1)-S^{(n)}(k)\right) -
g\left(\frac{k}{n},\mathcal{W}_n(S^{(n)}),\gamma(k+1)-\gamma(k)\right),
\end{equation}
for $k=0,\ldots, n-1$ and
with initial data
$Y^{\pi}(0)=x$.

Observe that $Y^\pi(k)$ is the portfolio value
at the time $(k/n)$ {\em before} a transfer is made at this time,
and the last term in equation (\ref{2.7}) represents
the cost of trading and it is the only source of friction
in the model.
We would mostly use the notation  $Y^\pi(k)$
when the dependence on the penalty function
is clear.

Let $\mathcal{A}_n(x)$ be the set of all portfolios
with an initial capital $x$.
The problem we consider is the
super-replication cost
of a European claim whose pay-off is given in (\ref{2.5}).
Then, the problem is
\begin{equation}
\label{e.def}
V_n:=V_n(g, F_n)=\inf\left\{x\ |\ \exists\ \pi\in\mathcal{A}_n(x) \ \
\mbox{such} \ \ \mbox{that}\  Y^\pi(n: g)\geq F_n,\
 \ \mathbb{Q}\mbox{-a.s.}
\right\}.
\end{equation}

\subsection{Trading cost}
\label{ss.penalty}

In this subsection, we state the main
assumption on $g$ in addition to Assumption \ref{a.g}.
We also provide several examples and
make the connection to the models with proportional
transaction costs
and models with price impact.

Let
$G:[0,1]\times\mathcal{C}^{+}[0,1]
\times\mathbb{R}\rightarrow [0,\infty]$,
be the Legendre transform (or convex conjugate) of $g$,
\begin{equation}\label{2.6}
\begin{split}
G(t,S,y)=\sup_{\nu\in\mathbb{R}}(\nu y-g(t,S,\nu)),
\ \ \forall{\ (t,S,y)} \in [0,1]\times\mathcal{C}^{+}[0,1]
\times\mathbb{R}.
\end{split}
\end{equation}
Observe that $G$ may become infinite.
It is well known that
the following dual relation holds,
$$
g(t,S,\nu)=\sup_{y\in\mathbb{R}}(\nu y-G(t,S,y)),
\ \ \forall{\ (t,S,\nu)} \in [0,1]\times\mathcal{C}^{+}[0,1]
\times\mathbb{R}.
$$
\begin{example}
\label{example1}
{\rm The following three cases
provide the essential examples
of the theory developed in this paper.

\noindent
{\bf{a.}}  For a given constant $\Lambda >0$, let
$$
g(t,S,\nu) = \Lambda \nu^2.
$$
In this example , we directly calculate that
$$
G(t,S,y) = y^2/(4\Lambda).
$$
This penalty function is
the Binomial version of the linear liquidity model of
Cetin, Jarrow, Protter \cite{CJP} that was studied in \cite{GS}
(see Remark \ref{r.liquidity} below).

In \cite{st}, it is proved that the optimal trading strategies
in continuous time do not have jumps.  Hence one expects that
in a Binomial model with large $n$,
the optimal portfolio changes are also small.
Thus any trading cost $g$ which is twice differentiable
essentially behaves like this example with $\Lambda = g_{\nu \nu}(t,S,0)$.

\noindent
{\bf{b.}}  This example which corresponds to
the example of proportional transaction costs.
For fix $n$ recently there has been
interesting results in relation to arbitrage. We refer
to the paper of Schachermayer \cite{Sc},
Pennanen and Penner \cite{PP} and the references
therein.
But as remarked earlier, fixed transaction cost
forces the super-replication to be very costly as $n$
tends to infinity. Hence
we take a sequence of problems with vanishing
transaction costs,
$$
g^c_n(t,S,\nu) = \frac{c}{\sqrt{n}} \ S(t) |\nu|,
$$
where $c>0$ is a constant. This discrete financial market with
vanishing transaction costs is exactly
the  model studied in \cite{K} by Kusuoka.
In this case, the dual function is given by
$$
G^c_n(t,S,y) =
\left\{
\begin{array}{ll}
0,\qquad &{\mbox{if}}\ |y| \le {c\ S(t)}{/\sqrt{n}},\\
+\infty, &{\mbox{else.}}
\end{array}
\right.
$$

\noindent
{\bf{c.}}  This example is a mixture of the previous
two.  It is obtained by appropriately
modifying the liquidity example.  In our analysis
this modification will be used in several places.
For a given constant $c$, let
$$
G^c_n(t,S,y) =
\left\{
\begin{array}{ll}
y^2/4 \Lambda,\qquad &{\mbox{if}}\ |y| \le {c S(t)}{/\sqrt{n}},\\
+\infty, &{\mbox{else.}}
\end{array}
\right.
$$
We directly calculate that
$$
g^c_n(t,S,\nu) =
\left\{
\begin{array}{ll}
\Lambda \nu^2 ,\qquad &{\mbox{if}}\ |\nu| \le
\frac{c S(t)}{2\sqrt{n} \Lambda},\\
\frac{c }{\sqrt{n}} S(t) |\nu| -
\frac{c^2 S^2(t)}{4 n \Lambda}, &{\mbox{else.}}
\end{array}
\right.
$$

\qed}
\end{example}

In the above, the third example is obtained from
the first one through an appropriate truncation
of the dual cost function $G$.
One may perform the same modification
to all given penalty functions $g$.  The following
definition formalizes this.

\begin{dfn}
\label{d.modify}
{\rm Let $g:[0,1]\times\mathcal{C}^{+}[0,1]
\times\mathbb{R}\rightarrow [0, \infty]$ be a
convex function with $g(t,S,0)=0$.
Then the} truncation of $g$ at
level $c$ {\rm is given by
$$
g_n^c(t,S,\nu):= g_n^c(t,S,\nu:g) =
\sup\left(\nu y-G(t,s,y)\ |\ |y| \le cS(t)/\sqrt{n}\ \right),
$$
where $G$ is the convex conjugate of the original $g$.}
\end{dfn}

An important but a simple observation
is the structure of the dual function of $g_n^c$.
Indeed, it is clear that the Legendre transform $G_n^c$
of $g_n^c$ is simply given by

\begin{equation}
\label{e.Gnc}
G_n^c(t,S,y) =
\left\{
\begin{array}{ll}
G(t,S,y),\qquad &{\mbox{if}}\  \ |y| \le cS(t)/\sqrt{n},\\
+\infty, &{\mbox{else.}}
\end{array}
\right.,
\end{equation}
where $G$ is the Legendre transform of $g$.

Note that for any $n\in\mathbb{N}$, $g_n^c$ converges
monotonically to $g$
as $c$ tends to infinity. Also observe
that Example 2.1.b is the truncation of the
following function
$$
g(t,S,\nu)=
\left\{
\begin{array}{ll}
0,\qquad &{\mbox{if}}\ \nu=0,\\
+\infty, &{\mbox{else.}}
\end{array}
\right.
$$
 Example 2.1.c, however, corresponds to
 the truncation of $g(t,S,\nu)=\Lambda\nu^2$.

We close this subsection, by connecting
the above model to the discrete liquidity models.

\begin{rem}
\label{r.liquidity}
{\rm
Following the liquidity model which was introduced in \cite{CJP},
we introduce a path dependent supply curve,
$$
\mathbf{S}:[0,1]\times\mathcal{C}^{+}[0,1]
\times\mathbb{R}\rightarrow\mathbb{R}.
$$
We assume  that $\mathbf{S}(t,S,\cdot)$ is adapted, i.e.,
it depends only on the restriction of $S$ to the interval $[0,t]$, namely
$$
\mathbf{S}(t,S,\nu)=\mathbf{S}(t,\hat{S},\nu)\quad
{\mbox{whenever}}\quad
S(s)=\hat{S}(s)\ \forall \ s\leq t.
$$
In the $n$-step binomial model, the price per stock
share at time $t$ is given by $\mathbf{S}(t,
\mathcal{W}_n(S^{(n)}),\nu)$,
where $\nu$ is the size of the transactions of the investor.
The penalty which represents the liquidity effect of the model is
then given by
$$
g(t,S,\nu)= \left(\ \mathbf{S}(t,S,\nu)-S(t)\right)\ \nu, \ \ \forall{\ (t,S,\nu)}
\in [0,1]\times\mathcal{C}^{+}[0,1]
\times\mathbb{R}.
$$
\qed}
\end{rem}

\section{Main results}
\label{sec:2+}
\setcounter{equation}{0}

Our first result is the characterization of
the dual problem.
We believe that this simple result
is quite interesting by itself.  Also
it will be the essential tool
to study the asymptotic behavior of the
super-replication costs.

Recall that $F_n$ and $V_n$ are
given, respectively,
in  (\ref{2.5}) and (\ref{e.def}).  Moreover, $g$ is
the trading cost function and $G$ is
its Legendre transform.

\begin{thm}[Duality]
\label{thm2.1}
Let $\mathcal{Q}_n$ be the set of all
probability measures on $(\Omega,\mathcal{F}_n)$.
Then
$$
V_n=\sup_{\mathbb{P}\in \mathcal{Q}_n}\mathbb{E}^\mathbb{P}
\left[F_n-\sum_{k=0}^{n-1}
G\left(\frac{k}{n}, \mathcal{W}_n(S^{(n)}),\mathbb{E}^\mathbb{P}\left[S^{(n)}(n)
 \left|  \right.\mathcal{F}_k\right]-
S^{(n)}(k)\right)\right],
$$
where $\mathbb{E}^\mathbb{P}$ denotes the expectation with respect
to the probability measure $\mathbb{P}$.
\end{thm}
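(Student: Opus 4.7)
The approach is to recognize the primal problem (\ref{e.def}) as a convex minimization on a finite-dimensional space. Indeed, $\mathcal{F}_n$ is generated by $2^n$ atoms, so $\gamma(k)$ is determined by its $2^{k-1}$ values on the atoms of $\mathcal{F}_{k-1}$, and the problem reduces to choosing finitely many real numbers $(x,\gamma(1),\ldots,\gamma(n))$ so as to minimize the linear functional $x$ subject to the $2^n$ constraints $Y^\pi(n)(\omega)\geq F_n(\omega)$; each constraint is concave in $(x,\gamma)$ because $g$ enters with a minus sign and is convex in $\nu$. Strict feasibility is immediate (take $\gamma\equiv 0$ and any $x>\max_\omega F_n(\omega)$), so Slater's condition holds and the Kuhn--Tucker theory yields strong duality. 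I would then introduce non-negative multipliers indexed by the atoms, form the Lagrangian
\[
L = x - \sum_\omega \mu(\omega)\bigl(Y^\pi(n)(\omega) - F_n(\omega)\bigr),
\]
and observe that infimizing in $x\in\mathbb{R}$ forces $\sum_\omega \mu(\omega)=1$; that is, the dual variable is a probability measure $\mathbb{P}\in\mathcal{Q}_n$.

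The essential algebraic step is a summation by parts in the gain process. Using $\gamma(0)=0$ and writing $\Delta\gamma(k+1):=\gamma(k+1)-\gamma(k)$,
\[
\sum_{k=0}^{n-1}\gamma(k+1)\bigl(S^{(n)}(k+1)-S^{(n)}(k)\bigr) = \sum_{k=0}^{n-1}\Delta\gamma(k+1)\bigl(S^{(n)}(n)-S^{(n)}(k)\bigr).
\]
This identity is exactly what makes $\mathbb{E}^\mathbb{P}[S^{(n)}(n)\,|\,\mathcal{F}_k]$ rather than $\mathbb{E}^\mathbb{P}[S^{(n)}(k+1)\,|\,\mathcal{F}_k]$ appear in the dual. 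After substitution and taking $\mathbb{E}^\mathbb{P}$, conditioning on $\mathcal{F}_k$ (using that $\Delta\gamma(k+1)$ is $\mathcal{F}_k$-measurable and that $g(k/n,\mathcal{W}_n(S^{(n)}),\cdot)$ is $\mathcal{F}_k$-measurable by adaptedness), the Lagrangian becomes
\[
L = \mathbb{E}^\mathbb{P}[F_n] + \sum_{k=0}^{n-1}\mathbb{E}^\mathbb{P}\!\left[g\bigl(k/n,\mathcal{W}_n(S^{(n)}),\Delta\gamma(k+1)\bigr) - \Delta\gamma(k+1)\,\alpha_k\right],
\]
where $\alpha_k:=\mathbb{E}^\mathbb{P}[S^{(n)}(n)\,|\,\mathcal{F}_k] - S^{(n)}(k)$.

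Finally, since the effective state space is finite, the infimum over the free variables $\Delta\gamma(k+1)$ is taken pointwise on each $\mathcal{F}_k$-atom, and on each atom it reduces to the scalar problem $\inf_\nu\{g(k/n,S,\nu)-\nu\,\alpha_k\} = -G(k/n,S,\alpha_k)$ by the very definition of the Legendre transform. Taking the supremum over $\mathbb{P}\in\mathcal{Q}_n$ and invoking strong duality then produces the claimed formula. The conceptually striking step is the summation by parts, which replaces the frictionless martingale-measure constraint $\mathbb{E}^\mathbb{P}[\Delta S^{(n)}(k+1)\,|\,\mathcal{F}_k]=0$ by a $G$-penalty on the terminal conditional expectation of the stock; the only real obstacle is bookkeeping, namely checking Slater's condition cleanly and justifying the interchange of infimum and expectation, both of which are automatic because $\mathcal{F}_n$ is finitely generated.
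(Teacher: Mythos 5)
Your proof is correct, but it takes a noticeably leaner route than the paper's. The paper's formal proof keeps the wealth values $Y(u)$ and the trade sizes $\beta(u)$ as explicit primal variables on the binary tree, dualizes the dynamics constraints (\ref{4.7})--(\ref{4.8}) and the terminal constraints separately with three families of multipliers $\Upsilon,\Phi,\Theta$, and only then recovers the probability-measure structure and the term $\mathbb{E}^{\mathbb{P}}[S^{(n)}(n)\,|\,\mathcal{F}_k]-S^{(n)}(k)$ from the dual feasibility equations (\ref{4.14})--(\ref{4.15}). You instead eliminate $Y$ and the trades up front by solving the recursion (\ref{2.7}), perform the Abel summation, and dualize only the $2^n$ terminal constraints, so that the normalization $\sum_\omega\mu(\omega)=1$ falls out of infimizing over $x$ and the conditional expectation falls out of the tower property. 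Amusingly, this is exactly the computation the paper carries out in its ``motivation'' passage to prove the weak-duality inequality $V_n\ge\sup_{\mathbb{P}}(\cdots)$, before explicitly declaring that the formal proof ``do[es] not use the above calculations''; you upgrade that computation to an equality by invoking Slater's condition (which, as you note, holds trivially via $\gamma\equiv 0$ and $g(\cdot,0)=0$), which is the same constraint qualification underlying the paper's appeal to Theorem 28.2 of Rockafellar. What your version buys is brevity and a transparent identification of the dual variable as a probability measure; what the paper's extended formulation buys is that the conditional-expectation structure is derived mechanically from the multiplier recursion rather than inserted via the summation-by-parts identity. The only points worth making explicit in a full write-up are (i) the change of variables from $\{\gamma(k)\}$ to $\{\Delta\gamma(k+1)\}$ is a bijection preserving the adaptedness requirements, so the infima over the increments decouple across $k$, and (ii) on $\mathbb{P}$-null atoms of $\mathcal{F}_k$ the conditional expectation is not uniquely defined, but those atoms contribute nothing to the dual functional (the paper handles this with its $0/0=0$ convention in (\ref{4.15})); neither is a gap.
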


The duality is proved in the next section.

In the limit theorem that we state below,
we assume that
the  Legendre transform $G$ of the
 convex penalty function $g$
satisfies the following.

\begin{asm}\label{a.limit}
We assume that $G$ satisfies the following growth and scaling conditions.

a). There are constants $C,p>0$ and $\beta \ge 2$
such that
\begin{equation}
\label{2.13}
{G}(t,S,y)\leq C\left|y\right|^\beta(1+\|S\|_\infty)^p, \ \ \forall(t,S,y)\in [0,1]\times \mathcal{C}^{+}[0,1]\times\mathbb{R}.
\end{equation}

b). There exists a continuous function
$$
\widehat{G}:[0,1]\times\mathcal{C}^{+}[0,1]\times\mathbb{R}\rightarrow
[0,\infty),
$$
such that
for any bounded sequence ${\{\alpha_n\}}$, discrete valued sequence $\xi_n \in \{-1,1\}$
and convergent sequences $t_n \to t$,
$S^{(n)} \to S$ (in the $\|\cdot \|_\infty$-norm),
\begin{equation}\label{2.13+}
\lim_{n\rightarrow\infty}\left|n G\left(t_n,S^{(n)},\frac{\xi_n \alpha_n}{\sqrt n}S^{(n)}(t_n)\right)
- \widehat{G}\left(t,S,\alpha_nS(t)\right)\right|=0.
\end{equation}
\end{asm}

It is straightforward to show that $\widehat{G}$ is quadratic in the $y$-variable.
Moreover,
the above assumption (\ref{2.13+})
is essentially equivalent to assume that $G$ is twice differentiable at the origin.
Indeed, when $G$ twice differentiable, Taylor approximation implies that
$$
\widehat{G}(t,S,y) = \frac12 y^2 G_{yy}(t,S,0).
$$

 We give the following example to clarify the above assumption.

 \begin{example}
\label{e.powers}
{\rm{{For $\gamma \ge 1$, let
$$
g_\gamma(\nu)=\frac{1}\gamma\  |\nu|^{\gamma}.
$$
Then, for $\gamma >1$
$$
G_\gamma(y)= \frac{1}{\gamma^*}\ |y|^{\gamma^*}, \ \ \gamma^*= \frac{\gamma}{\gamma-1}.
$$
For $\gamma=1$, $G_1(y)=0$ for
$|y| \le 1$ and is equal to infinity
otherwise.
Moreover,  we directly calculate that $\widehat{G}_\gamma(0)=0$
and for $y \neq 0$,
$$
\widehat{G}_\gamma(y):= \lim_{n \to \infty}\ n G_\gamma\left(
\frac{y}{\sqrt{n}}\right)
=
\left\{
\begin{array}{ll}
G_2(t,y),\ \ & {\mbox{if}}\ \gamma =2,\\
0, & {\mbox{if}}\ \gamma \in [1,2), \\
+\infty, & {\mbox{if}}\ \gamma >2.
\end{array}\right. .
$$
Notice that $G_\gamma$ is twice differentiable at the origin only for $\gamma \in [1,2]$.
\qed}}}

\end{example}

To describe the continuous time limit,
we need to introduce some further notation.
Let
$(\Omega_W, \mathcal{F}^{W}, \mathbb{P}^{W})$
be a complete probability space together with a standard
one-dimensional  Brownian motion
$W$ and the right continuous filtration
$\mathcal{F}^{W}_t=\sigma\big\{\sigma{\{W(s)|s\leq{t}\}
\bigcup\mathcal{N}}\big\}$,
where $\mathcal{N}$
is the collection of all $\mathbb{P}^W$ null sets.
For any $\alpha$
progressively measurable, bounded, real-valued
process, let $S_{\alpha}(t)$
be the continuous martingale given by
\begin{equation}\label{2.14}
\begin{split}
S_{\alpha}(t)=s_0\exp\left(\int_{0}^t \alpha(u) dW(u)-
\frac{1}{2}\int_{0}^t \alpha^2(u) du \right),
\ \ t\in [0,1].
\end{split}
\end{equation}
We also introduce the following
notation which is related to
the quadratic variation density
of $\ln S_\alpha$.
Recall that the constant $\sigma$ is
the volatility that was already introduced in the
dynamics of the discrete stock price process in
{\rm{(}}\ref{2.4}{\rm{)}}.
\begin{equation}
\label{e.a}
a(t:S_\alpha):=\frac{\frac{d\langle\ln S_\alpha\rangle(t)}{dt}- \sigma^2}{2\sigma}
= \frac{\alpha^2(t)-\sigma^2}{2\sigma}.
\end{equation}
The continuous limit is given
through an optimal control problem
in which  $\alpha$ is the control
and $S_\alpha$ is the controlled state
process.  To complete description of this
control problem, we need to specify the
set of admissible controls.

\begin{dfn}
\label{d.admissible}
{\rm
For any constant $c>0$,} an admissible
control at the level $c$
{\rm is a progressively measurable,
real-valued process $\alpha(\cdot)$
satisfying
$$
\left|a(\cdot:S_\alpha)\right|\le c, \ \
{\cal{L}}\otimes \mathbb{P}^W
 \ \ \mbox{a.s.},
 $$
 where ${\cal{L}}$ is the
 Lebesgue measure on $[0,1]$.}
The set of all admissible controls
{\rm is denoted by ${\cal{A}}^c$.}
\end{dfn}

As before
$g$ is the penalty function and
$g_n^c$ is the truncation of $g$ at the level $c$ as defined
in Definition \ref{d.modify}.
Let $F_n$ be a given claim and
$V_n=V_n(g,F_n)$ be the super-replication  cost defined in (\ref{e.def}).
For any level $c$, let $V^c_n=V_n(g_n^c,F_n)$.

The following theorem, which will be proved in Section \ref{sec5},
is the main result of the paper. It provides
the asymptotic behavior of the
truncated super-replication costs $V_n^c$.
Since $V_n^c \leq V_n$
for every $c$,
the below result can be used to show
the existence of a liquidity premium
as it was done for a Markovian example
in \cite{GS}, see Corollary \ref{cor2.1} and Remark \ref{rem.liquidity} below.

\begin{thm}[Convergence]
\label{thm2.2}
Let $G$ be a dual function satisfying the Assumption \ref{a.limit}
and let $\widehat{G}$ be as in {\rm{(}}\ref{2.13+}{\rm{)}}.
Then, for every $c>0$,
$$
\lim_{n\rightarrow\infty}V^c_n=
\sup_{\alpha\in {\cal{A}}^c} J(S_\alpha),
$$
\begin{equation}
\label{2.15}
J(S_\alpha):=
\mathbb{E}^W
\left[F(S_{\alpha})-\int_{0}^1
\widehat{G}\left(t, S_\alpha, a(t:S_\alpha)S_\alpha(t)\right) dt\right],
\end{equation}
where $\mathbb{E}^W$ denotes the expectation with respect to
$\mathbb{P}^W$.
\end{thm}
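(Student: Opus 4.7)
The plan is to combine the duality representation of Theorem \ref{thm2.1} with a compactness and approximation argument, splitting the desired equality into two matching inequalities. Applying Theorem \ref{thm2.1} to the truncated cost $g_n^c$ gives
$$
V_n^c = \sup_{\mathbb{P}\in\mathcal{Q}_n} \mathbb{E}^{\mathbb{P}}\!\left[F_n - \sum_{k=0}^{n-1} G_n^c\!\left(\tfrac{k}{n}, \mathcal{W}_n(S^{(n)}), \mathbb{E}^{\mathbb{P}}[S^{(n)}(n)\,|\,\mathcal{F}_k] - S^{(n)}(k)\right)\right].
$$
The definition (\ref{e.Gnc}) of $G_n^c$ forces the conditional one-step drift $\mathbb{E}^{\mathbb{P}}[S^{(n)}(n)|\mathcal{F}_k]-S^{(n)}(k)$ to lie in $[-cS^{(n)}(k)/\sqrt{n},\, cS^{(n)}(k)/\sqrt{n}]$ whenever the penalty is finite, which is the discrete counterpart of the admissibility condition $|a(\cdot:S_\alpha)|\le c$ of Definition \ref{d.admissible}.

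For the upper bound $\limsup_n V_n^c \le \sup_{\alpha\in\mathcal{A}^c} J(S_\alpha)$, I would pick an $\varepsilon$-optimizer $\mathbb{P}_n$ in the above duality and use Lemma \ref{randomization} to extend the discrete probability space so that $\mathcal{W}_n(S^{(n)})$ and a candidate continuous semimartingale live on a common space with the extra randomness needed to extract a Brownian motion in the limit. The drift constraint above yields a uniform quadratic-variation estimate on $\mathcal{W}_n(S^{(n)})$ under $\mathbb{P}_n$, hence tightness; along a weakly convergent subsequence the limit is of the form $S_\alpha$ with $\alpha\in\mathcal{A}^c$. Lemma \ref{convergence} together with the continuity and polynomial growth (\ref{2.2}) of $F$ then allows passing to the limit in $\mathbb{E}^{\mathbb{P}_n}[F_n]$, while the scaling (\ref{2.13+}) and the growth bound (\ref{2.13}) identify the Riemann sum $\sum_k G(k/n,\cdots)$ with $\int_0^1 \widehat{G}(t,S_\alpha,a(t:S_\alpha)S_\alpha(t))\,dt$, producing a limit of the form $J(S_\alpha)$ with $\alpha\in\mathcal{A}^c$.

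The reverse inequality $\liminf_n V_n^c \ge \sup_{\alpha\in\mathcal{A}^c} J(S_\alpha)$ is the harder direction. Fix $\alpha\in\mathcal{A}^c$. I would first invoke Lemma \ref{density} to replace $\alpha$ by a regular approximation, on which Kusuoka's construction from Section 6 is applicable. That construction produces on $(\Omega,\mathbb{Q})$ a sequence of martingales that converge in law, jointly with their quadratic variations, to $(S_\alpha, \langle \ln S_\alpha\rangle)$. Reinterpreting these martingales as Radon--Nikodym densities with respect to $\mathbb{Q}$ yields measures $\mathbb{P}_n\in\mathcal{Q}_n$ whose one-step conditional drift of $S^{(n)}$ is approximately $\alpha(k/n)S^{(n)}(k)/\sqrt{n}$ and thus lies within the truncation threshold $c$ for $n$ large. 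Inserting these $\mathbb{P}_n$ into the duality formula, the weak convergence together with the continuity and polynomial growth of $F$ gives $\mathbb{E}^{\mathbb{P}_n}[F_n]\to\mathbb{E}^W[F(S_\alpha)]$, and the scaling (\ref{2.13+}) turns the penalty sum into the Riemann approximation of $\int_0^1 \widehat{G}(t,S_\alpha,a(t:S_\alpha)S_\alpha(t))\,dt$, giving $\liminf_n V_n^c\ge J(S_\alpha)$ and, after optimizing over $\alpha$, the claim.

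The principal obstacle is the lower bound: Kusuoka's construction must simultaneously deliver weak convergence of the candidate discrete martingales to $S_\alpha$, joint convergence of the quadratic variations so that the summed penalties converge to the $\widehat{G}$-integral, and one-step conditional drifts compatible with the truncation level $c$. Ensuring that the density reduction of Lemma \ref{density} produces $\alpha$-approximations for which both $J$ and the admissibility bound are preserved, and controlling the errors from (\ref{2.13+}) uniformly in $k$ and $n$ so that the telescoping penalty sum really converges to the prescribed integral, is the technical core of the proof.
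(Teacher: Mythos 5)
Your overall architecture coincides with the paper's: duality from Theorem \ref{thm2.1} applied to $g_n^c$, the drift bound forced by (\ref{e.Gnc}), tightness and weak convergence for the upper bound, and Lemma \ref{density} plus Kusuoka's construction for the lower bound. The lower-bound half of your sketch is essentially the paper's proof. But the upper bound as you describe it has a genuine gap at the point where you ``identify the Riemann sum $\sum_k G(k/n,\cdots)$ with $\int_0^1\widehat{G}\left(t,S_\alpha,a(t:S_\alpha)S_\alpha(t)\right)dt$.'' After Skorohod representation, the scaling hypothesis (\ref{2.13+}) turns the sum into $\int_0^1\widehat{G}\left(t,M,\alpha_n([nt])M(t)\right)dt$ with $\alpha_n(k)=\sqrt{n}\,\xi_k(M^{(n)}(k)-S^{(n)}(k))/S^{(n)}(k)$. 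Lemma \ref{convergence} only gives convergence of the integrated process $\int_0^t\alpha_n([nu])\,du$ to $A(t:M)$; the drifts $\alpha_n([nt])$ themselves do not converge pointwise, so the integrand does not converge and the identification you assert fails in general. Since the penalty is subtracted, what one actually needs is the one-sided bound $\liminf_n\int_0^1\widehat{G}(t,M,\alpha_n([nt])M(t))\,dt\ge\int_0^1\widehat{G}(t,M,a(t:M)M(t))\,dt$, which the paper obtains by extracting a.s.-convergent convex combinations $\eta_n\in conv(\tilde\alpha_n,\tilde\alpha_{n+1},\dots)$ (Lemma A1.1 of Delbaen--Schachermayer) and exploiting the convexity of $\widehat{G}$ in its third argument. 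Without invoking that convexity, this step of your argument does not close.

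A second, smaller defect: after passing to the weak limit you claim the limit process ``is of the form $S_\alpha$ with $\alpha\in\mathcal{A}^c$.'' It is not: it is a strictly positive martingale $M$ on an abstract probability space satisfying $|a(\cdot:M)|\le c$, with no given Brownian representation on the canonical Wiener space. The role of Lemma \ref{randomization} is precisely to prove $J(M)\le\sup_{\alpha\in\mathcal{A}^c}J(S_\alpha)$ for such an $M$ (by perturbing with an independent Brownian motion and a regular-conditional-distribution randomization), not to ``extend the discrete probability space'' as you describe. You cite the right lemma but assign it the wrong job, so as written your upper bound silently assumes the conclusion of that lemma rather than using it.
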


Since
$V_n^c \le V_n$ for every $c>0$,
we have the following immediate corollary.
\begin{cor}\label{cor2.1}
\begin{equation}
\label{2.115}
\lim\inf_{n\rightarrow\infty}V_n \geq
\sup_{\alpha\in {\cal{A}}}\mathbb{E}^W
\left[F(S_{\alpha})-\int_{0}^1
\widehat{G}\left(t, S_\alpha, a(t:S_\alpha)
S_\alpha(t)\right) dt\right],
\end{equation}
where ${\cal{A}}$ is the set of all bounded,
progressively measurable processes.
\end{cor}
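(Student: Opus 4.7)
The plan is to deduce this corollary directly from Theorem \ref{thm2.2} together with the monotonicity $V_n^c\le V_n$ (valid because truncating $g$ at level $c$ produces a smaller penalty, hence a smaller super-replication cost), by sending $c\to\infty$ and identifying the resulting supremum over $\alpha\in\bigcup_{c>0}\mathcal{A}^c$ with the supremum over $\mathcal{A}$.

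First, I would fix an arbitrary $c>0$ and write
$$
\liminf_{n\to\infty} V_n \;\ge\; \liminf_{n\to\infty} V_n^c \;=\; \lim_{n\to\infty} V_n^c \;=\; \sup_{\alpha\in\mathcal{A}^c} J(S_\alpha),
$$
where the last equality is Theorem \ref{thm2.2}. Taking the supremum over $c>0$ on the right-hand side preserves the inequality, yielding
$$
\liminf_{n\to\infty} V_n \;\ge\; \sup_{c>0}\,\sup_{\alpha\in\mathcal{A}^c} J(S_\alpha).
$$

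The only nontrivial point is to identify this iterated supremum with $\sup_{\alpha\in\mathcal{A}} J(S_\alpha)$, i.e. to show $\mathcal{A}=\bigcup_{c>0}\mathcal{A}^c$. The inclusion $\mathcal{A}^c\subset\mathcal{A}$ is immediate: the constraint $|a(\cdot:S_\alpha)|\le c$ combined with the explicit formula \eqref{e.a} forces $\alpha^2\le\sigma^2+2\sigma c$ a.e., so $\alpha$ is bounded. Conversely, given a bounded progressively measurable $\alpha\in\mathcal{A}$ with $\|\alpha\|_\infty\le M$, the identity \eqref{e.a} gives the pointwise estimate
$$
|a(t:S_\alpha)| \;=\; \frac{|\alpha^2(t)-\sigma^2|}{2\sigma} \;\le\; \frac{M^2+\sigma^2}{2\sigma},
$$
so $\alpha\in\mathcal{A}^{c(M)}$ with $c(M):=(M^2+\sigma^2)/(2\sigma)$. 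Hence $\mathcal{A}=\bigcup_{c>0}\mathcal{A}^c$, and consequently $\sup_{c>0}\sup_{\alpha\in\mathcal{A}^c}J(S_\alpha)=\sup_{\alpha\in\mathcal{A}}J(S_\alpha)$, which completes the argument.

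There is essentially no obstacle here: the entire content is packaged in Theorem \ref{thm2.2}, and the corollary is a one-line exhaustion argument once one observes that the level-$c$ admissibility constraint on $a(\cdot:S_\alpha)$ is just a bound on $\alpha$ itself via \eqref{e.a}.
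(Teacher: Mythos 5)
Your proposal is correct and follows exactly the route the paper intends: the paper treats the corollary as immediate from $V_n^c\le V_n$ together with Theorem \ref{thm2.2}, and your argument simply fills in the (straightforward) exhaustion step $\mathcal{A}=\bigcup_{c>0}\mathcal{A}^c$ via the identity (\ref{e.a}). Nothing further is needed.
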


A natural question which for now remains open
is under which assumptions the above inequality
is in fact an equality.  For the specific quadratic
penalty and  Markovian pay-offs, \cite{GS}
proves the equality.

\begin{rem}[Liquidity Premium]
\label{rem.liquidity}
{\rm{
It is an interesting question whether
the limiting super-replication cost contain
 liquidity premium. Namely, whether
the right hand side of (\ref{2.115}) is strictly bigger than $V_{BS}(F)$.
For Markovian non-affine pay-offs
it was proved in \cite{CST}.
Notice that,
the standard Black--Scholes price is given by
$V_{BS}(F):=\mathbb{E}^W F(S_{\sigma})$
and this can be achieved by simply setting
the control $\alpha\equiv\sigma$
in the right hand side of (\ref{2.115}).

In the generality considered in this paper,
the following argument might be utilized to establish
liquidity premium. Fix $\epsilon>0$.
From (\ref{2.13}), one can prove the following estimate
$$
\sup_{\alpha\in {\cal{A}}^\epsilon} \mathbb{E}^W
\left[\int_{0}^1
G\left(t, S_\alpha,a(t:S_\alpha)
S_\alpha(t)\right) dt\right]= O(\epsilon^2).
$$
Thus in order to prove the strict inequality, it remains
 to show that there exists a constant $C>0$ such that
$$
\sup_{\alpha\in {\cal{A}}^\epsilon}
 \mathbb{E}^W \left[ F(S_{\alpha})\right] \geq
 \mathbb{E}^W F(S_{\sigma})+ C\epsilon.
$$
Notice that $\sup_{\alpha\in {\cal{A}}^\epsilon} \mathbb{E}^W F(S_{\alpha})$
is exactly the $G$-expectation
of Peng. For many classes of pay-offs,
this methodology can be used to prove the existence
of a liquidity premium.
Indeed for convex type of pay-offs
such as put options, call options, Asian (put or call) options
this can be verified directly, by observing that
the maximum in
the above expression
is achieved for
$\alpha\equiv\sqrt{\sigma(\sigma+2\epsilon)}$.
\qed}}\end{rem}

We close this section by revisiting the Example \ref{e.powers}.

\begin{example}
{\rm Let $g_\gamma$ be the power penalty function given
in Example \ref{e.powers}.  In the case of $\gamma =2$,
$\widehat{G}$ is also a quadratic function.  Hence the limit
stochastic optimal control problem
is exactly the one derived and studied in \cite{CST,GS}.
The case $\gamma >2$ is not covered by our hypothesis
but formally the limit value function is equal to the Black-Scholes
price as $\widehat{G}$ is finite and zero only when $\alpha \equiv \sigma$.
This result can be proved from
our results by appropriate approximation arguments.
The case $\gamma \in [1,2)$
is included in our hypothesis and
the limit of the truncated problem is the $G$-expectation.  Namely,
only volatility processes $\alpha$ that are in a certain interval
are admissible.

Since in these markets
the investors make only small transactions,
larger $\gamma$ means
less trading cost.  Hence when $\gamma$
is sufficiently large (i.e., $\gamma >2$),
then the trading penalty is completely
avoided in the limit.  Hence for these values of $\gamma$,
the limiting super-replication cost is simply
the usual replication price in a complete market.}
\qed
\end{example}

\section{Duality}
\label{sec4}\setcounter{equation}{0}
In this section, we prove the duality result
Theorem \ref{thm2.1}. Fix $n\in\mathbb{N}$ and consider the
$n$-step binomial model with the penalty function $g$.
We first motivate the result and prove
one of the inequalities.  Then, the proof is completed
by casting the super-replication problem
as a convex program and using the standard duality.
Indeed, for any $k=0,\ldots, n-1$,
$$
Y^\pi(k+1)= Y^\pi(k) + \gamma(k+1) [S^{(n)}(k+1)-S^{(n)}(k)]
-g\left(\frac{k}{n}, \gamma(k+1)-\gamma(k)\right).
$$
Since $\gamma(0)=0$ and $Y^\pi(0)=x$,
we sum over $k$ to arrive at
\begin{eqnarray*}
Y^\pi(n) &=& x + \sum_{k=0}^{n-1}
\left( \gamma(k+1) [S^{(n)}(k+1)-S^{(n)}(k)]
-g\left(\frac{k}{n}, \gamma(k+1)-\gamma(k)\right)\right)\\
&=& x + \sum_{k=0}^{n-1}
\left( [\gamma(k+1)-\gamma(k)]\ [S^{(n)}(n)-S^{(n)}(k)]
-g\left(\frac{k}{n}, \gamma(k+1)-\gamma(k)\right)\right).
\end{eqnarray*}
Let $\mathbb{P}$ be a probability measure
in $\mathcal{Q}_n$.  We take the conditional expectations
and use the definition of the dual function $G$
to obtain,
\begin{eqnarray*}
\mathbb{E}^{\mathbb{P}}[Y^\pi(n)]
&=& x + \mathbb{E}^{\mathbb{P}}\left(  \sum_{k=0}^{n-1}
 [\gamma(k+1)-\gamma(k)]\ [\mathbb{E}^{\mathbb{P}}(S^{(n)}(n)| {\cal{F}}_k)-S^{(n)}(k)]
-g\left(\frac{k}{n}, \gamma(k+1)-\gamma(k)\right)\right)\\
&\le &
x +\mathbb{E}^{\mathbb{P}}\left( \sum_{k=0}^{n-1}
 G\left(\frac{k}{n}, \mathbb{E}^{\mathbb{P}}(S^{(n)}(n)| {\cal{F}}_k)-S^{(n)}(k)
\right)\right).
\end{eqnarray*}
If $\pi$ is a super-replicating strategy
with initial wealth $x$,
then $Y^\pi(n) \ge F_n$ and
\begin{eqnarray*}
x &\ge &
\mathbb{E}^{\mathbb{P}}
\left( F_n - \sum_{k=0}^{n-1}
 G\left(\frac{k}{n}, \mathbb{E}^{\mathbb{P}}(S^{(n)}(n)| {\cal{F}}_k)-S^{(n)}(k)
\right)\right).
\end{eqnarray*}
Since $\mathbb{P}\in \mathcal{Q}_n$ is arbitrary,
the above calculation proves that
$$
V_n \ge \sup_{\mathbb{P}\in \mathcal{Q}_n}
\mathbb{E}^{\mathbb{P}}
\left( F_n - \sum_{k=0}^{n-1}
G\left(\frac{k}{n}, \mathbb{E}^{\mathbb{P}}(S^{(n)}(n)| {\cal{F}}_k)-S^{(n)}(k)
\right)\right).
$$
The opposite inequality is proved using
the standard duality.  Indeed, the
proof that follows do not use the above
calculations.
\vspace{10pt}

\noindent
{\em Proof of Theorem \ref{thm2.1}.}

We model the $n$-step binomial model as in \cite{CJ}.
Consider a tree whose nodes
are sequences of the form $(a_1,...,a_k)\in\{-1,1\}^k$, $0\leq k\leq n$.
The set of all nodes will be denoted by $\mathbb{V}$.
The empty sequence (corresponds to the case $k=0$) is the
root of the tree and will be denoted by $\emptyset$.
In
our model each node of the form
$u=(u_1,...,u_k)\in \{-1,1\}^k$, $k<n$ has two
immediate successors $(u_1,...,u_k,1)$ and $(u_1,...,u_k,-1)$.
Let
$\mathbb{T}:=\{-1,1\}^n$ be the set of all terminal nodes.
For $u\in\mathbb{V}\setminus \mathbb{T}$,
denote by $u^{+}$ the set which consists of
the immediate successors of $u$.
The unique immediate predecessor of a node
$u=(u_1,...,u_k)\in\mathbb{V}\setminus\{\emptyset\}$
is denoted by
$u^{-}:=(u_1,...,u_{k-1})$.
For $u=(u_1,...,u_k)\in\mathbb{V}\setminus\mathbb{T}$,
let
$$
\mathbb{T}(u):=\{v\in\mathbb{T}|v_i=u_i \ \ \forall 1\leq i\leq k\},
$$
with $\mathbb{T}(\{\emptyset\})=\mathbb{T}$.
For $u\in\mathbb{V}$,
 $l(u)$ is the number of elements in the sequence $u$,
 where we set $l(\emptyset)=0$.
Finally, we define the functions $S:\mathbb{V}\rightarrow\mathbb{R}$,
$\hat{S}:\mathbb{V}\rightarrow\mathcal{C}^{+}[0,1]$ and
$\hat{F}:\mathbb{T}\rightarrow\mathbb{R}_{+}$ by
\begin{eqnarray*}
& S(u)= s_0\exp\left(\frac{\sigma}{\sqrt{n}}\sum_{i=1}^{l(u)} u_i\right), \ \
\hat{S}(u)= \mathcal{W}_n(\{S(u_1,...,u_{k\wedge\l(u)})\}_{k=0}^n)
\\& \hat{F}(v)=F(\hat{S}(v)), \ \   \forall u\in \mathbb{V},  \ \ v\in\mathbb{T}.
\end{eqnarray*}
In this notation, the super-replication cost
$V_n$ is the solution of the following
convex minimization problem
\begin{equation}\label{4.5}
{\mbox{minimize}} \ Y(\emptyset)
\end{equation}
over all $\beta, \gamma, Y$
subject to the constrains
\begin{equation}\label{4.6}
\gamma(\emptyset)=0,
\end{equation}
\begin{equation}
\label{4.7}
\gamma({u})-\gamma({u^{-}})-\beta(u^{-})=0, \ \ \forall u\in
\mathbb{V}\setminus {\{\emptyset\}},
\end{equation}
\begin{equation}
\label{4.8}
Y(u)+g\left(\frac{l(u^{-})}{n},\hat{S}(u^{-}),\beta(u^{-})\right)-
\gamma(u)[S(u)-S(u^{-})]-Y(u^{-})\leq 0, \ \
\forall u\in \mathbb{V}\setminus {\{\emptyset\}},
\end{equation}
\begin{equation}\label{4.9}
Y(u) \geq \hat{F}(u), \ \ \forall{u}\in\mathbb{T}.
\end{equation}
Notice that (\ref{2.7}) implies that
the constraint (\ref{4.8}) should be in fact an equality.
However,
this modification of the constraint does not
alter the value of the optimization problem.
The optimization problem which is given by
(\ref{4.5})--(\ref{4.9}) is an ordinary convex
program on the space $\mathbb{R}^{\mathbb{V}\setminus\mathbb{T}}
\times\mathbb{R}^{\mathbb{V}}\times\mathbb{R}^{\mathbb{V}}$.
Following the Kuhn-Tucker theory (see \cite{R}) we define the Lagrangian
$L:\mathbb{R}^\mathbb{V}\times\mathbb{R}^\mathbb{{V}\setminus {\{\emptyset\}}}_{+}
\times\mathbb{R}^{\mathbb{T}}_{+}\times\mathbb{R}^{\mathbb{V}\setminus\mathbb{T}}\times
\mathbb{R}^{\mathbb{V}}\times\mathbb{R}^{\mathbb{V}}\rightarrow\mathbb{R}$ by
\begin{eqnarray}
&\nonumber
& L(\Upsilon, \Phi,\Theta,\beta,\gamma,Y)= Y(\emptyset)+\Upsilon(\emptyset) \gamma(\emptyset)+
\sum_{u\in \mathbb{V}\setminus {\{\emptyset\}}}\Upsilon(u)
\left(\gamma({u})-\gamma({u^{-}})-\beta(u^{-})\right)\\
\nonumber
&&\quad\qquad + \sum_{u\in \mathbb{V}\setminus {\{\emptyset\}}}
\Phi(u)\left(Y(u)+g\left(\frac{l(u^{-})}{n},\hat{S}(u^{-}),\beta(u^{-})\right)
-\gamma(u)\left(S(u)-S(u^{-})\right)-Y(u^{-})\right)\\
\nonumber
&&\quad\qquad + \sum_{u\in\mathbb{T}}\Theta(u)(\hat{F}(u)-Y(u)).
\end{eqnarray}
We rearrange the above expressions to arrive at
\begin{eqnarray}
\label{4.11}
&&L(\Upsilon, \Phi, \Theta,\beta,\gamma,Y)= Y(\emptyset)(1-\sum_{u\in \emptyset^{+}}\Phi(u))
+\sum_{u\in\mathbb{V}\setminus(\{\emptyset\}
\cup \mathbb{T})}
Y(u)\big(\Phi(u)-\sum_{\tilde{u}\in u^{+}}\Phi(\tilde{u})\big)\\
\nonumber
&&\quad \qquad +\sum_{u\in\mathbb{T}}Y(u)(\Phi(u)-\Theta(u))+
\gamma(\emptyset)\big(\Upsilon(\emptyset)-\sum_{u\in \emptyset^{+}}\Upsilon(u)\big)\\
\nonumber
&&\quad \qquad +\sum_{u\in\mathbb{V}\setminus\{\emptyset\}}\gamma(u)\bigg(\Upsilon(u)-
\sum_{\tilde u\in u^{+}}\Upsilon(\tilde u)\Phi(u)\big(S(u)-S(u^{-})\big)\bigg)
+\sum_{u\in\mathbb{T}}\Theta(u)\hat{F}(u)\\
\nonumber
&&\quad \qquad +
\sum_{u\in\mathbb{V}\setminus\mathbb{T}}\left(\sum_{\tilde{u}\in\ u^{+}}\Phi(\tilde u)
g\left(\frac{l(u)}{n},\hat{S}(u),\beta(u)\right)-\beta(u)\sum_{\tilde{u}\in\ u^{+}}\Upsilon(\tilde u)\right).
\end{eqnarray}
By Theorem 28.2 in \cite{R}, we conclude that
the value of the optimization problem  (\ref{4.5})-(\ref{4.9}) is also
equal to
\begin{equation}
\label{4.12}
V_n=\sup_{(\Upsilon, \Phi, \Theta)\in \mathbb{R}^\mathbb{V}\times
\mathbb{R}^\mathbb{{V}\setminus {\{\emptyset\}}}_{+}
\times\mathbb{R}^{\mathbb{T}}_{+}}
\quad \inf_{(\beta,\gamma,Y)\in \mathbb{R}^{\mathbb{V}
\setminus\mathbb{T}\times}
\mathbb{R}^{\mathbb{V}}\times\mathbb{R}^{\mathbb{V}}}
L(\Upsilon, \Phi, \Theta,\beta,\gamma,Y).
\end{equation}
Using (\ref{4.11}) and (\ref{4.12}), we conclude that
\begin{eqnarray}\label{4.13}
V_n&=&\sup_{(\Upsilon, \Phi, \Theta)\in D}\inf_{(\beta,\gamma,Y)\in \mathbb{R}^{\mathbb{V}
\setminus\mathbb{T}\times}
\mathbb{R}^{\mathbb{V}}\times\mathbb{R}^{\mathbb{V}}}
\bigg[\sum_{u\in\mathbb{T}}\Theta(u)\hat{F}(u)\\
&&+
\left.\sum_{u\in\mathbb{V}\setminus\mathbb{T}}\left(\sum_{\tilde{u}\in\ u^{+}}\Phi(\tilde u)
g\left(\frac{l(u)}{n},\hat{S}(u),\beta(u)\right)-
\beta(u)\sum_{\tilde{u}\in\ u^{+}}\Upsilon(\tilde u)\right)\right]\nonumber
\end{eqnarray}
where $D\subset \mathbb{R}^\mathbb{V}\times\mathbb{R}^\mathbb{{V}\setminus {\{\emptyset\}}}
\times\mathbb{R}^{\mathbb{T}}_{+}$
is the subset of all  $(\Upsilon,\Phi,\Theta)$ satisfying the constraints
\begin{equation}\label{4.14}
\begin{split}
\sum_{u\in \emptyset^{+}}\Phi(u)=1, \ \ \sum_{\tilde u\in\ u^{+}}\Phi(\tilde u)=\Phi(u),
\ \ \forall u\in\mathbb{V}\setminus(\mathbb{T}\cup{\{\emptyset\}}),
\end{split}
\end{equation}
\begin{equation}\label{4.14+}
\begin{split}
\Upsilon(u)=\Phi(u)(S(u)-S(u^{-}))+\sum_{\tilde u\in\ u^{+}}\Upsilon(\tilde u),
\ \ \forall{u}\in\mathbb{V}\setminus\{\emptyset\},
\end{split}
\end{equation}
\begin{equation}\label{4.14++}
\Phi(u)=\Theta(u), \ \ \forall{u}\in \mathbb{T}.
\end{equation}
By (\ref{4.14})-(\ref{4.14+}), we obtain that for any $(\Upsilon, \Phi, \Psi)\in D$,
\begin{eqnarray}\label{4.15}
\frac{\sum_{\tilde u\in\ u^{+}}\Upsilon(\tilde u)}
{\sum_{\tilde u\in\ u^{+}}\Phi(\tilde u)}=
\frac{\sum_{\tilde u\in\mathbb{T}(u)}\Phi(u)S(u)}{\Phi(u)}-S(u),
 \ \ \forall u\in \mathbb{V}\setminus\mathbb{T},
\end{eqnarray}
where we use the convention that $0/0=0$
(observe that if $\Phi(u)=0$ then $\sum_{\tilde u\in\mathbb{T}(u)}\Phi(\tilde u)S(\tilde u)=0$).
Let $\mathbb{D}\subset \mathbb{R}^{\mathbb{V}\setminus\{\emptyset\}}_{+}$
be the set of all functions
$\Phi:\mathbb{V}\setminus\{\emptyset\}\rightarrow\mathbb{R}_{+}$
which satisfy (\ref{4.14}).
In view of (\ref{2.6}), (\ref{4.13})-(\ref{4.14}) and (\ref{4.14++})-(\ref{4.15}),
\begin{equation}\label{4.16}
V_n=\sup_{\Phi\in \mathbb{D}}\sum_{u\in\mathbb{T}}\Phi(u)\bigg(\hat{F}(u)-
G\bigg(\frac{l(u)}{n},\hat{S}(u),\frac{\sum_{\tilde u\in\mathbb{T}(u)}
\Phi(\tilde u)S(\tilde u)}{\Phi(u)}-S(u)\bigg)\bigg).
\end{equation}
Clearly there is a natural bijection $\pi:\mathbb{D}\rightarrow\mathcal{Q}_n$
(where, recall $\mathcal{Q}_n$ is the set of
all probability measures on $(\Omega,\mathcal{F}_n)$)
such that for any $\Phi\in\mathbb{D}$ the probability
measure $P:=\pi(\Phi)$ is given by
\begin{equation}\label{4.16+}
\mathbb{P}(\xi_1=u_1,\xi_2=u_2,...,\xi_n=u_n)=\Phi(u),
\ \ \forall u=(u_1,...,u_n)\in\mathbb{T}.
\end{equation}
Finally we combine (\ref{4.16}) and (\ref{4.16+}) to  conclude that
$$
V_n=\sup_{\mathbb{P}\in \mathcal{Q}_n}
\mathbb{E}^\mathbb{P}\bigg(F_n-
\sum_{k=0}^{n-1}G\left(\frac{k}{n},\mathcal{W}_n(S^{(n)}(k)),
\mathbb{E}^\mathbb{P}(S^{(n)}(n)|\mathcal{F}_k)-S^{(n)}(k)\right)\bigg).
$$
\qed

\section{Proof of Theorem \ref{thm2.2}}
\label{sec5}
\setcounter{equation}{0}
In this section we prove Theorem
\ref{thm2.2}.  However,the proofs of several technical results
needed in this proof are relegated to Section \ref{sec3}.
Also the Kusuoka's construction of
discrete martingales are outlined in the next section.

We start with some definitions.
Let $B$ be the canonical map
on the space $\mathcal{C}[0,1]$, i.e., for each $t \in [0,1]$
$B(t):\mathcal{C}[0,1]\rightarrow\mathbb{R}$ is
given by $B(t)(x)=x(t)$. Next, let
$M$ be a strictly positive, continuous martingale
defined on some probability space
$(\tilde{\Omega},\tilde{F},\tilde{P})$ and satisfies
\begin{equation}
\label{3.17+}
M(0)=s_0 \ \ \mbox{and} \ \ \frac{d\langle\ln M\rangle(t)}{dt}\leq C,
\ \ \mathcal{L}\otimes\tilde{P} \ \mbox{a.s.}
\end{equation}
for some constant $C$.
For a martingale $M$ satisfying (\ref{3.17+}), we define several
related quantities.  Let
$\widehat{G}$ be as in Assumption \ref{a.limit} and
$\sigma$
be the constant volatility in the definition
of the discrete market, c.f., (\ref{2.4}).  Set
\begin{eqnarray}
\label{e.A}
A(t:M)&:=&\frac{ \langle\ln M\rangle(t) - \sigma^2 t}{2\sigma},
\qquad
a(t:M):= \frac{d}{dt} A(t:M),\\
\label{3.17++}
J(M)& =& \tilde{E}\left[F(M)-
\int_{0}^1 \widehat{G}\left(t, M,  a(t:M) M(t)\right)
 dt\right],
\end{eqnarray}
where $\tilde{E}$ is the expectation with respect to $\tilde{P}$.
Notice that the notation $a$ is consistent with
the already introduced function $a(t:S_\alpha)$ in (\ref{e.a})
and
$J(M)$ agrees  with the function defined in  (\ref{2.15}).
Also, from (\ref{2.2}), (\ref{2.13}) and (\ref{3.17+})
it follows that the right hand side
of (\ref{3.17++}) is well defined.
\vspace{10pt}

\noindent
{\em{ Upper Bound.}}

For fix $c>0$, we start by proving the upper bound of Theorem \ref{thm2.2}:
\begin{equation}\label{5.1}
\lim\sup_{n\rightarrow\infty}V^c_n\leq\sup_{\alpha\in \mathcal{A}^C}
J(S_{\alpha}).
\end{equation}
In what follows, to simplify the notation,
we assume that indices have
been renamed so that the whole sequence converges.
Let $n\in\mathbb{N}$.

By Theorem \ref{thm2.1},
 we construct  probability measures $P_n$
on $(\Omega,\mathcal{F}_n)$ such that
\begin{eqnarray}
\nonumber
V^c_n &\le &\frac{1}{n}+E_n\bigg[F\big(\mathcal{W}_n(S^{(n)})\big)-
\sum_{k=0}^{n-1}G_n^c\left(\frac{k}{n},\mathcal{W}_n(S^{(n)}),
E_n\left[S^{(n)}(n)|\mathcal{F}_k\right]-S^{(n)}(k)\right)\bigg],\\
\label{5.4}
 &= &\frac{1}{n}+E_n\bigg[F\big(\mathcal{W}_n(S^{(n)})\big)-
\sum_{k=0}^{n-1}G\left(\frac{k}{n},\mathcal{W}_n(S^{(n)}),
E_n\left[S^{(n)}(n)|\mathcal{F}_k\right]-S^{(n)}(k)\right)\bigg]
\end{eqnarray}
where $E_n$ denotes the expectation with respect to $P_n$.
In the last identity we used the form of the dual function
$G_n^c$.
Indeed, (\ref{e.Gnc}) states that
either $G_n^c=G$ or $G_n^c=+\infty$.
This argument also shows that for any $0\leq k<n$,
\begin{equation}\label{5.3}
\begin{split}
\left|E_n\left[S^{(n)}(n)|\mathcal{F}_k\right]-S^{(n)}(k)\right|\leq
\frac{c }{\sqrt n}\ S^{(n)}(k),\ \  P_n \ \ \mbox{a.s.}
\end{split}
\end{equation}
Indeed, if above does not hold, then
in view of (\ref{e.Gnc}), we would conclude that
the right hand side of (\ref{5.4}) would be equal to
negative infinity.  But it is easy to show that
 $V_n^c$ is non-negative.

For $ 0 \le k \le n$, set
\begin{eqnarray*}
M^{(n)}(k)& := & E_n\big(S^{(n)}(n)|\mathcal{F}_k\big), \\
\alpha_n(k) &:= & \frac{\sqrt{n}\xi_k(M^{(n)}(k)-S^{(n)}(k))}{S^{(n)}(k)}\\
A_n(t) & := & \int_{0}^t\alpha_n([nu])du = \frac1n \sum_{k=0}^{[nt]-1} \alpha_n(k) +
\frac{nt-[nt]}{n}\ \alpha_n([nt]).
\end{eqnarray*}
Let $Q_n$ be the joint distribution of the stochastic processes
($\mathcal{W}_n(S^{(n)}),A_n)$ under the measure $P_n$.
In view of (\ref{5.3}), the hypothesis of  Lemma
\ref{convergence} is satisfied.  Hence, there exists a subsequence (denoted by $n$ again)
and a probability measure $P$ on the probability space $\mathcal{C}[0,1]$
such that
$$
Q_n\Rightarrow Q \ \mbox{on the space} \ \mathcal{C}[0,1]\times \mathcal{C}[0,1]
$$
where  $Q$ is the joint distribution under $P$
of the canonical process $B$ and the process
$A(\cdot:B)$ defined in (\ref{e.A}).
From the Skorohod representation theorem (Theorem 3 of \cite{D})
it follows that
there exists a probability space $(\tilde{\Omega},\tilde{F},\tilde{P})$ on which
\begin{equation}
\label{5.6++}
\left(\mathcal{W}_n(S^{(n)}),A_n(\cdot)\right)\rightarrow
\left(M,A(\cdot:M)\right) \ \ \tilde{P}\mbox{-a.s.}
\end{equation}
on the space $\mathcal{C}[0,1]\times \mathcal{C}[0,1]$, where $M$ is a strictly positive martingale.
Furthermore,  (\ref{5.3}) implies that Lemma \ref{tightness} applies to this sequence.
Hence we have the following pointwise estimate,
$$
|a(t:M)|=|A^\prime(t:M)|\le c \ \ \mathcal{L}\otimes\tilde{P}\mbox{-a.s.}
$$

Next, we will replace the sequence $\alpha_n$
(which converges only weakly) by a pointwise convergent sequence.
Indeed, by Lemma A1.1 in \cite{DS}, we construct a sequence
$$
\eta_n \in conv( \tilde\alpha_n,\tilde \alpha_{n+1},...),\qquad
{\mbox{where}}\qquad \tilde\alpha_n(t):= \alpha_n([nt])
$$
such that $\eta_n$ converges almost surely in $\mathcal{L}\otimes \tilde{P}$ to a stochastic process $\eta$.
We now use (\ref{5.6++}) together with
the Lebesgue dominated convergence theorem.
The result is
\begin{eqnarray*}
\int_{0}^t \eta(u)du&=&\lim_{n\rightarrow\infty}\int_{0}^t \eta_n(u)du=
\lim_{n\rightarrow\infty}\int_{0}^t \alpha_n([nu])du\\
&=&A(t:M)= \int_{0}^t  a(u:M) du,
\qquad \qquad \mathcal{L}\otimes\tilde{P} \ \ \mbox{a.s.}
\end{eqnarray*}
Hence, we conclude that
$$
\eta(t)=a(t:M),\qquad \qquad \mathcal{L}\otimes\tilde{P} \ \ \mbox{a.s.}
$$
We are now ready to use the assumption (\ref{2.13+}).  Indeed, by definition
$$
M^{(n)}(k)-S^{(n)}(k)= \alpha_n(k) \frac{ \xi_k}{\sqrt{n}}\ S^{(n)}(k)
= \alpha_n(k) \frac{ \xi_k}{\sqrt{n}}\ \mathcal{W}_n(S^{(n)})(k/n).
$$
Also, by (\ref{5.6++}),
$\mathcal{W}_n(S^{(n)})$ converges to $M$.  Hence in view of
(\ref{5.3}), we can use (\ref{2.13+}) to conclude that
$$
\lim_{n\rightarrow\infty} \left | n G\left(\frac{[nt]}{n},\mathcal{W}_n(S^{(n)}),
M^{(n)}([nt])-S^{(n)}([nt])\right)-
\widehat{G} (t,M,\alpha_n([nt])M(t)) \right |=0, \ \ \mathcal{L}\otimes\tilde{P}
 \ \mbox{a.s.}
 $$
The estimate (\ref{3.3}) and the growth assumption (\ref{2.13}) imply
that the above sequences are uniformly integrable.
Therefore,
\begin{eqnarray*}
I&=& \lim_{n\rightarrow\infty}E_n\left[\sum_{k=0}^{n-1}G\left(
\frac{k}{n},\mathcal{W}_n(S^{(n)}),E_n\left[S^{(n)}(n)|\mathcal{F}_k
\right]-S^{(n)}(k)\right)\right]\\
&=& \lim_{n\rightarrow\infty}E_n\left[ \int_{0}^1
n G\left(\frac{[nt]}{n},\mathcal{W}_n(S^{(n)}),M^{(n)}([nt])-S^{(n)}([nt])\right)dt \right]\\
&=& \lim_{n\rightarrow\infty}E_n\left[ \int_{0}^1
\widehat{G} \left(t,M, \alpha_n([nt])M(t)\right) dt\right],
\end{eqnarray*}
where again, without loss of generality (by passing to a subsequence)
we assumed that the above limits exist.
We now use the convexity of $\widehat{G}$ with respect to third variable
(in fact, $\widehat{G}$ is quadratic in $y$) together with the uniform integrability
(which again follows from (\ref{2.13}) and Lemma \ref{tightness}) and the Fubini theorem.
The result is
\begin{eqnarray*}
I&=& \lim_{n\rightarrow\infty}E_n\left[ \int_{0}^1
\widehat{G} \left(t,M, \alpha_n([nt])M(t)\right) dt\right]
= \lim_{n\rightarrow\infty}\tilde{E}\left[ \int_{0}^1
\widehat{G} \left(t,M, \alpha_n([nt])M(t)\right) dt\right]\\
&\ge & \lim_{n\rightarrow\infty}\tilde{E}\left[ \int_{0}^1
\widehat{G} \left(t,M, \eta_n(t)M(t)\right) dt\right]\\
&=& \tilde{E}\left[ \int_{0}^1
\widehat{G} \left(t,M, \eta(t)M(t)\right) dt\right]
= \tilde{E}\left[ \int_{0}^1
\widehat{G} \left(t,M, a(t:M) M(t)\right) dt\right].
\end{eqnarray*}

The growth assumption on $F$, namely  (\ref{2.2}) and Lemma \ref{tightness},
also imply that
the sequence $F\big(\mathcal{W}_n(S^{(n)})\big)$
is uniformly integrable.  Then, by (\ref{5.6++}),
$$
\lim_{n\rightarrow\infty}E_nF\big(\mathcal{W}_n(S^{(n)})\big)=\tilde{E}F(M).
$$
Hence, we have shown that
\begin{eqnarray*}
\limsup_{n \to \infty} V^c_n &\le & \limsup_{n \to \infty}
E_n\bigg[F\big(\mathcal{W}_n(S^{(n)})\big)-
\sum_{k=0}^{n-1}G\left(\frac{k}{n},\mathcal{W}_n(S^{(n)}),
E_n\left[S^{(n)}(n)|\mathcal{F}_k\right]-S^{(n)}(k)\right)\bigg]\\
&\le &
\tilde{E} \left[ F(M)- \int_{0}^1
\widehat{G} \left(t,M,a(t:M) M(t)\right) dt\right]
= J(M).
\end{eqnarray*}
The above  together with
Lemma \ref{randomization} yields (\ref{5.1}).
\vspace{15pt}

\noindent
{\em Lower Bound.}

Let $\mathcal L({c})$ be the class of all
adapted volatility processes given in Definition \ref{d.vol}.
In Lemma \ref{density} below, it is shown that
this class is dense.  Hence for the lower bound it
is sufficient to prove that for any $\alpha\in\mathcal L({c})$,
\begin{equation}
\label{e.vnc}
\lim_{n\rightarrow\infty}V^c_n\geq J(S_{\alpha}).
\end{equation}

Our main tool is the Kusuoka construction which is summarized in
Theorem \ref{Kusuoka}.

We fix  $\alpha\in\mathcal L({c})$.
Let $P^{(\alpha)}_n$,
$\kappa_n^{(\alpha)}$ and $M_n^{(\alpha)}$
be as in Theorem \ref{Kusuoka}.
In view of the definition of $M^{(\alpha)}_n$, (\ref{3.39-}),
and the bounds on $\kappa^{(\alpha)}_n$,
the following estimate holds for all sufficiently large $n$,
$$
|{M}^{(\alpha)}_n(k)-S^{(n)}(k)|\leq
\frac{c }{\sqrt n}\ S^{(n)}(k), \ \ \forall{k}, \ \ P^{(\alpha)}_n \ \ \mbox{a.s.}
$$
By the dual representation and the above estimate,
\begin{equation}
\label{e.vnc1}
\lim_{n\rightarrow\infty}V^c_n\geq
\lim\sup_{n\rightarrow\infty}
E^{(\alpha)}_n\left[F\left(\mathcal{W}_n(S^{(n)})\right)-
\sum_{k=0}^{n-1}G\left(\frac{k}{n},\mathcal{W}_n(S^{(n)}),M^{(\alpha)}_n(k)-S^{(n)}(k)\right)\right]
\end{equation}
where $E^{(\alpha)}_n$ denotes the expectation with respect to $P^{(\alpha)}_n$.
From Theorem \ref{Kusuoka} and the Skorohod representation theorem it follows that
there exists a probability space $(\tilde\Omega,\tilde{F},\tilde{P})$
on which
\begin{equation}
\label{5.26}
\left( \mathcal{W}_n(S^{(n)}), \mathcal{W}_n(\kappa^{(\alpha)}_n)\right)
\rightarrow \left(S_{\alpha}, a(\cdot:S_\alpha)\right)\ \ \tilde{P} \ \ \mbox{a.s.}
\end{equation}
on the space $\mathcal{C}[0,1]\times\mathcal{C}[0,1]$.
Recall that the quadratic variation density $a$ is defined in (\ref{e.a})
and also in (\ref{e.A}).
We argue exactly as in the upper bound to show that
$$
\lim_{n\rightarrow\infty}{E}^{(\alpha)}_n
F\big( \mathcal{W}_n(S^{(n)})\big)=\tilde{E}F(S_{\alpha}).
$$

Finally, we need to connect the difference $(M^{(\alpha)}_n-S^{(n)})$ to
$\kappa^{(\alpha)}_n$ and therefore to $a(\cdot:S_\alpha)$ through
(\ref{5.26}).  Indeed, in view of the definition (\ref{3.39-}),
\begin{eqnarray*}
\sqrt{n}\xi_{k}(M^{(\alpha)}_n(k)-S^{(n)}(k))
&= &\sqrt{n}\xi_{k} S^{(n)}(k) \left(
\exp\left(\xi_k\kappa^{(\alpha)}_n(k)n^{-1/2}\right) -1 \right)\\
&=&  S^{(n)}(k)  \kappa^{(\alpha)}_n(k) + o(n^{-1/2}).
\end{eqnarray*}
In the approximation above, we used the fact that
$\kappa^{(\alpha)}$'s are uniformly bounded
by construction.
We now use (\ref{5.26}) to arrive at
\begin{equation}
\label{5.26++}
\lim_{n\rightarrow\infty} \sqrt{n}\xi_{[nt]}(M^{(\alpha)}_n([nt])-S^{(n)}([nt]))=a(t:S_\alpha) S_{\alpha}(t),
\ \ \mathcal{L}\otimes\tilde{P} \ \ \mbox{a.s.}
\end{equation}
As in the upper bound case,
the growth condition (\ref{2.13}) and Lemma \ref{tightness} imply that
the sequences
$$
n G({[nt]}/n,\mathcal{W}_n(S^{(n)}),M^{(\alpha)}_n([nt])-S^{(n)}([nt]))\quad
{\mbox{and}}\quad
\widehat{G}\left(t,S_{\alpha},a(t:S_\alpha)S_{\alpha}(t)\right),
$$
are uniformly integrable in $\mathcal{L}\otimes\tilde{P}$.
Since $\widehat{G}$ is continuous by the
Fubini's theorem and (\ref{2.13+}), (\ref{5.26}), (\ref{5.26++}),
we obtain,
\begin{eqnarray*}
\tilde I&:=&
\lim_{n\rightarrow\infty}E_n\left[\sum_{k=0}^{n-1}G\left(
\frac{k}{n},\mathcal{W}_n(S^{(n)}),M^{(\alpha)}_n(k)-S^{(n)}(k)\right)\right]\\
&=& \lim_{n\rightarrow\infty}\tilde E \left[\int_{[0,1]}
n G\left(\frac{[nt]}{n},\mathcal{W}_n(S^{(n)}),M^{(\alpha)}_n([nt])-S^{(n)}([nt])\right)dt
\right]\\
&=& \lim_{n\rightarrow\infty}\tilde E \left[\int_{[0,1]}
\widehat{G}\left(t,S_{\alpha},a(t:S_\alpha)S_{\alpha}(t)\right)dt \right].
\end{eqnarray*}
We use the above limit results for $\tilde I$ and for $F_n$ in (\ref{e.vnc1}).
The resulting inequality is exactly (\ref{e.vnc}).
Hence the proof of the lower bound is also complete.
\qed

\section{Kusuoka's construction}\label{secKusuoka}\setcounter{equation}{0}

In this section, we fix a martingale
$S_\alpha$ given  by (\ref{2.14}).
Then, the main goal of this section  is to
construct a sequence of martingales
on the discrete space that
approximate $S_\alpha$.
We also require the quadratic variation of $S_\alpha$ to be
approximated as well.

In \cite{K} Kusuoka
provides an elegant approximation
for sufficiently smooth volatility process $\alpha$.
Here we will only state the results of Kusuoka
and refer to \cite{K} for the construction.
We start by defining the class
of ``smooth" volatility processes. As before, let
$(\Omega_W,\mathcal{F}^{W},\mathbb{P}^{W}$) be a  Brownian probability space
and $W$ be the standard Brownian motion.

\begin{dfn}
\label{d.vol}
{\rm
For a fixed constant $c>0$,
 $\mathcal{L}(c)\subset\mathcal{A}^c$ is the set of all
adapted processes $\alpha$ on the
Brownian space $(\Omega_W,\mathcal{F}^{W},\mathbb{P}^{W}$)
which are given by
$$
\alpha(t):=\alpha(t,\omega)=f(t,W(\omega)),
\qquad
(t,\omega) \in [0,1] \times \Omega_W,
$$
where $f:[0,1]\times \mathcal{C}[0,1]\rightarrow\mathbb{R}_{+}$
is a bounded function which satisfies the following conditions.
\vspace{5pt}

\noindent
i). For any $t \in [0,1]$, if two $S, \tilde S \in  \mathcal{C}[0,1]$
satisfy $S(u)=\tilde S(u)$ for all $u \in [0,t]$, then
$f(t,S)=f(t,\tilde S)$.
(This simply means that $\alpha$ is adapted.)
\vspace{5pt}

\noindent
ii). There is $\delta(f)>0$ such that for all $(t,S) \in [0,1] \times
 \mathcal{C}[0,1]$,
$$
\left|\frac{f^2(t,S) - \sigma^2}{2 \sigma}\right| \le
c- \delta(f),
$$
and
\begin{equation}
\label{3.37+}
 f(t,S)=\sigma, \ \ \mbox{if} \ \ t>1-\delta(f).
\end{equation}
\vspace{2pt}

\noindent iii).  There is $L(f)>0$ such that for all
$(t_1,t_2)\in [0,1]$, $S, \tilde S \in  \mathcal{C}[0,1]$,
$$
\left|f(t_1,S)-f(t_2,\tilde S)\right|\leq L(f)
\left(|t_1-t_2|+\|S-\tilde S\|_\infty\right).
$$
}
\qed
\end{dfn}

In Kusuoka's construction
the condition (\ref{3.37+}) is not needed.
However, this regularity
allows us to control
the behavior of the martingales near maturity .

Recall from Section \ref{sec:2} that $\Omega=\{1,1\}^\infty$,
$\xi$ is the canonical map (i.e., $\xi_k(\omega)=\omega_k$)
and $\mathbb{Q}$ is the symmetric product measure.
The martingales constructed in \cite{K} are of the form
\begin{equation}
\label{3.39-}
M^{(\alpha)}_n(k,\omega):=S^{(n)}(k,\omega)
\exp\left(\xi_k(\omega)\kappa^{(\alpha)}_n(k,\omega)n^{-1/2}\right), \ \ 0\leq k\leq n,\
\omega \in \Omega,
\end{equation}
where the sequence of discrete {\em predictable}
processes $\kappa^{(\alpha)}_n$ need to be constructed.
Now let $P^{(\alpha)}_n$ be a measure on $\Omega$
such that
the process $M_n^{(\alpha)}$ is a  $P^{(\alpha)}_n$-martingale.
Since, $\kappa_n^\alpha$
will be constructed as predictable processes,
a direct calculation shows that
on the $\sigma$-algebra $\mathcal{F}_n$,
this martingale measure is given by,
$$
\frac{dP^{(\alpha)}_n}{d\mathbb{Q}}(\omega)
=2^n \prod_{k=1}^n \tilde{q}^{(\alpha)}_n(k,\omega),
$$
where for $0 \le k \le n$, $\omega \in \Omega$,
\begin{eqnarray*}
\tilde{q}^{(\alpha)}_n(k,\omega)&=&
q^{(\alpha)}_n(k,\omega)\mathbb{I}_{\{\xi_k(\omega)=1\}}
+(1-q^{(\alpha)}_n(k,\omega))\mathbb{I}_{\{\xi_k(\omega)=-1\}}, \\
q^{(\alpha)}_n(k,\omega)&=&
\frac{\exp\left(\xi_{k-1}\kappa^{(\alpha)}_n(k-1,\omega) n^{-1/2}\right)
 -\left( \exp\left(\sigma n^{-1/2}\right) e^{(\alpha)}_n(k,\omega)\right)^{-1}}{
\exp\left(\sigma n^{-1/2}\right)
e^{(\alpha)}_n(k,\omega) -\left( \exp\left(\sigma n^{-1/2}\right)e^{(\alpha)}_n(k,\omega)\right)^{-1}}\\
e^{(\alpha)}_n(k,\omega)&= &\exp\left(\kappa^{(\alpha)}_n(k,\omega)n^{-1/2}\right).
\end{eqnarray*}
We require that $\kappa^{(\alpha)}_n$
is constructed to satisfy,
\begin{eqnarray*}
|\kappa^{(\alpha)}_n(k,\omega)|&<&c-\delta, \ \ \kappa^{(\alpha)}_n(k,\omega)>\delta-\frac{1}{2},\\
|\kappa^{(\alpha)}_n(k-1,\omega)-\kappa^{(\alpha)}_n(k,\omega)|&\leq& \frac{L}{\sqrt n}, \ \ 1\leq k\leq n,
\end{eqnarray*}
with constants $L, \delta>0$ independent of $n$ and $\omega$.
This regularity conditions
on $\kappa^{(\alpha)}_n$
imply that for all sufficiently large $n$, $q_n(k,\omega)\in(0,1)$
for all $k \le n$ and $\omega \in \Omega=\{1,1\}^\infty$ .
Hence, $P^{(\alpha)}_n$ is indeed a probability measure.

We also require
$$
\kappa^{(\alpha)}_n(n,\omega)=0  \ \ \mbox{for} \  \mbox{sufficiently} \  \mbox{large} \  n,
$$
to ensure $M^{(\alpha)}_n(n)=S^{(n)}(n)$.

Let $Q^{(\alpha)}_n$
be the joint distribution  of the pair
$(\mathcal{W}_n(S^{(n)}),\mathcal{W}_n(\kappa^{(\alpha)}_n)$
under $P^{(\alpha)}_n$
on the space $\mathcal{C}[0,1]\times \mathcal{C}[0,1]$
with the uniform topology.

Recall once again that the probability space
is $\Omega=\{-1,1\}^\infty$ and the filtration $\{\mathcal{F}_k\}_{k=0}^n$
is the usual one generated by the canonical map
and  that the quadratic variation density process $a(\cdot:S_\alpha)$
is given in (\ref{e.a}) as
$$
a(t:S_\alpha) = \frac{\alpha^2(t)-\sigma^2}{2\sigma}.
$$

\begin{thm}[Kusuoka \cite{K}]\label{Kusuoka}
Let $c>0$ and $\alpha\in\mathcal{L}(c)$.
Then, on $\left(\Omega,\{\mathcal{F}_k\}_{k=0}^n\right)$
there exists a sequence of predictable processes
$\kappa^{(\alpha)}_n$
satisfying the above conditions, hence
there also exist sequences of martingales
$M^{(\alpha)}_n$ and
martingale measures $P^{(\alpha)}_n$
so that
$$
Q_n^{(\alpha)}\Rightarrow Q^{(\alpha)} \ \mbox{on} \ \mbox{the} \ \
\mbox{space} \ \mathcal{C}[0,1]\times\mathcal{C}[0,1]
$$
where $Q^{(\alpha)}$ is the joint distribution of
$\left(S_{\alpha}, a(\cdot:S_\alpha)\right)$ under the Wiener measure $\mathbb{P}^W$.
\end{thm}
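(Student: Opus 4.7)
The plan is to (i) construct the predictable processes $\kappa_n^{(\alpha)}$ from the representation $\alpha(t)=f(t,W)$, (ii) observe that the prescribed transition probabilities $q_n^{(\alpha)}$ automatically make $M_n^{(\alpha)}$ a $P_n^{(\alpha)}$-martingale, and (iii) apply a martingale functional central limit theorem to identify the joint weak limit.

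For (i), one builds inductively a predictable discrete proxy $\widehat{W}_n^{(k-1)}$ for the Brownian driver $W$ from the past coordinates $\xi_1,\ldots,\xi_{k-1}$ and the previously defined values $\kappa_n^{(\alpha)}(1),\ldots,\kappa_n^{(\alpha)}(k-1)$, and then sets
$$
\kappa_n^{(\alpha)}(k,\omega) := \frac{f^2\!\left(\tfrac{k}{n},\,\widehat{W}_n^{(k-1)}(\omega)\right)-\sigma^2}{2\sigma}.
$$
The conditions on $f$ in Definition~\ref{d.vol} translate directly into the required properties of $\kappa_n^{(\alpha)}$: the uniform bound in (ii) of that definition yields $|\kappa_n^{(\alpha)}|\le c-\delta(f)$ and $\kappa_n^{(\alpha)}>\delta(f)-1/2$; Lipschitz continuity in (iii) together with $\|\widehat{W}_n^{(k)}-\widehat{W}_n^{(k-1)}\|_\infty = O(n^{-1/2})$ gives $|\kappa_n^{(\alpha)}(k)-\kappa_n^{(\alpha)}(k-1)|\le L(f)/\sqrt{n}$; and (\ref{3.37+}) forces $\kappa_n^{(\alpha)}(n)=0$ once $1/n<\delta(f)$. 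The stated formula for $q_n^{(\alpha)}$ is precisely the unique solution of the one-step martingale equation $q\,M_n^{(\alpha)}(k)|_{\xi_k=1}+(1-q)\,M_n^{(\alpha)}(k)|_{\xi_k=-1}=M_n^{(\alpha)}(k-1)$, and the bounds on $\kappa_n^{(\alpha)}$ ensure $q_n^{(\alpha)}(k,\omega)\in(0,1)$ for all large $n$; hence $P_n^{(\alpha)}$ is a bona fide probability measure under which $M_n^{(\alpha)}$ is a martingale, with $M_n^{(\alpha)}(n)=S^{(n)}(n)$.

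For (iii), a direct Taylor expansion of the second conditional moment yields
$$
E^{P_n^{(\alpha)}}\!\left[(M_n^{(\alpha)}(k)-M_n^{(\alpha)}(k-1))^2\mid\mathcal{F}_{k-1}\right]
= \frac{M_n^{(\alpha)}(k-1)^2}{n}\bigl(\sigma^2+2\sigma\kappa_n^{(\alpha)}(k)+o(1)\bigr),
$$
and since $\sigma^2+2\sigma\kappa_n^{(\alpha)}(k)=f^2\bigl(k/n,\widehat{W}_n^{(k-1)}\bigr)$, the predictable quadratic variation of $M_n^{(\alpha)}$ is a Riemann sum for $\int_0^\cdot (M_n^{(\alpha)})^2\,f^2\bigl(t,\widehat{W}_n^{(t)}\bigr)\,dt$. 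Granted $\widehat{W}_n\Rightarrow W$ under $P_n^{(\alpha)}$, this converges to $\langle S_\alpha\rangle=\int_0^\cdot S_\alpha^2\,\alpha^2\,du$, while the Lindeberg condition is trivial since $|M_n^{(\alpha)}(k)-M_n^{(\alpha)}(k-1)|=O(n^{-1/2})\,M_n^{(\alpha)}(k-1)$ uniformly. The martingale functional CLT (e.g.\ Jacod--Shiryaev VIII.2.29) delivers $\mathcal{W}_n(M_n^{(\alpha)})\Rightarrow S_\alpha$; since $|M_n^{(\alpha)}(k)-S^{(n)}(k)|=O(n^{-1/2})\,S^{(n)}(k)$ uniformly, also $\mathcal{W}_n(S^{(n)})\Rightarrow S_\alpha$. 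Continuity of $f$ combined with $\widehat{W}_n\Rightarrow W$ gives $\mathcal{W}_n(\kappa_n^{(\alpha)})\Rightarrow a(\cdot:S_\alpha)$, and joint convergence $Q_n^{(\alpha)}\Rightarrow Q^{(\alpha)}$ follows from the continuous mapping theorem.

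The technical core is the inductive construction of $\widehat{W}_n$: under $P_n^{(\alpha)}$ the naive random walk $n^{-1/2}\sum_{i}\xi_i$ is not asymptotically Brownian (the change of measure induces an $O(1)$ conditional drift), so $\widehat{W}_n$ must include a Girsanov-type correction that predictably compensates this drift using the previously constructed $\kappa_n^{(\alpha)}$. This delicate bootstrap --- ensuring simultaneously that $\widehat{W}_n\Rightarrow W$ and that the martingales $M_n^{(\alpha)}$ converge to $S_\alpha$ under the measures $P_n^{(\alpha)}$ that they themselves determine --- is the main obstacle; the uniform bounds and Lipschitz regularity of Definition~\ref{d.vol} are precisely what makes the construction feasible, and the detailed execution is the content of \cite{K}.
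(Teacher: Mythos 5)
The paper does not actually prove Theorem \ref{Kusuoka}: it is imported verbatim from Kusuoka, with the construction of $\kappa^{(\alpha)}_n$ referred to Proposition 5.3 of \cite{K} and the accompanying tightness/identification results to Lemma 4.2 and Proposition 4.27 there. So there is no in-paper argument to measure yours against. Your outline is a faithful reconstruction of the architecture of Kusuoka's proof --- a predictable discretization of $a(\cdot:S_\alpha)=\frac{f^2-\sigma^2}{2\sigma}$ evaluated along a proxy for the Brownian driver, the one-step martingale equation pinning down $q^{(\alpha)}_n$ (and hence $P^{(\alpha)}_n$), the translation of conditions (i)--(iii) and (\ref{3.37+}) of Definition \ref{d.vol} into the required bounds, increment regularity and terminal vanishing of $\kappa^{(\alpha)}_n$, and a martingale functional CLT to identify the joint limit. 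You also correctly locate the one step that carries all the weight: the proxy $\widehat W_n$ cannot be the naive rescaled walk, because under $P^{(\alpha)}_n$ the coordinates acquire an $O(n^{-1/2})$ conditional bias whose cumulative effect is $O(1)$, so $\widehat W_n$ must carry a predictable Girsanov-type compensation built from the very $\kappa^{(\alpha)}_n$ being defined, and one must then verify $\widehat W_n\Rightarrow W$ under the self-referential measures $P^{(\alpha)}_n$. Since you explicitly defer that step to \cite{K}, your write-up is, as a standalone proof, incomplete at exactly that point --- but that is also precisely the status the theorem has in the paper, which states it without proof.

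Two minor remarks. First, the lower bound $\kappa^{(\alpha)}_n>\delta-\tfrac12$ does not follow from condition (ii) of Definition \ref{d.vol} alone as you assert; it comes from the nonnegativity of $f$ (so that $\frac{f^2-\sigma^2}{2\sigma}\ge-\frac{\sigma}{2}$) together with Kusuoka's normalization, and is there only to keep $q^{(\alpha)}_n\in(0,1)$. Second, note that the first marginal in $Q_n^{(\alpha)}$ is $\mathcal{W}_n(S^{(n)})$, not $\mathcal{W}_n(M^{(\alpha)}_n)$; you do handle this via the uniform estimate $|M^{(\alpha)}_n(k)-S^{(n)}(k)|=O(n^{-1/2})S^{(n)}(k)$, which is the right reduction.
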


For the construction
of $\kappa^{(\alpha)}_n$, we refer the reader to Proposition 5.3 in \cite{K}.

\begin{rem}
\label{r.kusuoka}
{\rm It is clear that one constructs the
process $\kappa^{(\alpha)}_n$ by
an appropriate discrete approximation of $a(\cdot:S_\alpha)$.
However, this discretization is not only in time
but is also in the probability space.  Namely,
the process  $\alpha$ is a process on
the canonical probability space $\mathcal{C}[0,1]$
while  $\kappa^{(\alpha)}_n$ lives in
the discrete space $\Omega$.
This difficulty is resolved by Kusuoka in \cite{K}.}
\qed
\end{rem}

We complete this section by
stating (without proof) a lemma which
summarizes the main results from Section 4 in \cite{K};
see in particular, Lemma 4.2 and Proposition 4.27 in \cite{K}.
In our analysis the below lemma
provides the crucial tightness result which is used in the
proof of the upper bound of
Theorem \ref{thm2.2}. Furthermore, the inequality (\ref{3.3}) is
essential in establishing the uniform integrability of
several sequences.

Let $(\Omega,\mathbb{Q})$ be the probability
space introduced in Section \ref{sec:2}.

\begin{lem}
[Kusuoka \cite{K}]\label{tightness}
Let $M^{(n)}$ be a sequence positive
martingales with respect to probability measures
$P_n$ on $(\Omega,\mathcal{F}_n)$.
Suppose that there
exists a constant $c>0$ such that for any $k\leq n$,
$$
\left|S^{(n)}(k)-M^{(n)}(k)\right |\leq \frac {c S^{(n)}(k)}{\sqrt n}, \ \ P_n \ \ \mbox{a.s.}
$$
Then, for any $p>0$
\begin{equation}
\label{3.3}
\sup_{n} E_n \big(\max_{0\leq k\leq n}S^{(n)}(k)\big)^p<\infty,
\end{equation}
where ${E}_n$ is an expectation with respect to ${P}_n$.

Moreover,  the distribution $Q_n$ on $\mathcal{C}[0,1]$ of
$\mathcal{W}_n(S^{(n)})$ under $P_n$ is a tight
sequence and under any limit point
$Q$ of this sequence, the canonical process
$B$ is a strictly positive martingale in  its usual filtration.
Furthermore, the quadratic variation density
of $B$ under $Q$ satisfies,
$$
\left| a(t:B)\right| \le c, \ \ \mathcal{L}\otimes Q\mbox{-a.s.}
$$
\end{lem}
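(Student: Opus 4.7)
The plan is to exploit the $P_n$-martingale property of $M^{(n)}$ together with the multiplicative closeness $|M^{(n)}(k)-S^{(n)}(k)|\le cS^{(n)}(k)/\sqrt n$ to obtain, in turn, uniform moment bounds, tightness, the martingale property of the weak limit, and the sharp quadratic variation estimate. Throughout I set $y_k:=M^{(n)}(k)/S^{(n)}(k)-1$, so that $|y_k|\le c/\sqrt n$, and work in log-coordinates where multiplicative structure decouples.

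For the moment estimate (\ref{3.3}), the observation is that $M^{(n)}(k+1)/M^{(n)}(k)=e^{\sigma\xi_{k+1}/\sqrt n}(1+y_{k+1})/(1+y_k)=:1+Z_k$ has $E_n[Z_k|\mathcal F_k]=0$ and $|Z_k|\le C/\sqrt n$. Taylor-expanding yields $\log M^{(n)}(k)=\log M^{(n)}(0)+\sum_{j<k}Z_j-\tfrac12\sum_{j<k}Z_j^2+O(n^{-1/2})$. The first sum is a martingale with increments of size $n^{-1/2}$, hence sub-Gaussian uniformly in $n$ by Azuma--Hoeffding; the second is pathwise bounded. This gives $\sup_n E_n[(\max_k M^{(n)}(k))^p]<\infty$ for every $p>0$, and the multiplicative comparability transfers the bound to $S^{(n)}$.

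For tightness of $Q_n$, a direct computation on the two-point conditional structure shows $\mathrm{Var}_n(M^{(n)}(k+1)|\mathcal F_k)\le CS^{(n)}(k)^2/n$. Burkholder--Davis--Gundy combined with the preceding moment bound then yields $E_n[|\mathcal W_n(M^{(n)})(t)-\mathcal W_n(M^{(n)})(s)|^{2p}]\le C|t-s|^p$ for $p>1$, which is Kolmogorov's criterion. The comparability transfers tightness to $\mathcal W_n(S^{(n)})$ and forces both sequences to share the same weak limit $B$; uniform integrability then transfers the discrete martingale property of $M^{(n)}$ to show that $B$ is a $Q$-martingale in its natural filtration. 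Strict positivity of $B$ follows from positivity of $M^{(n)}$ together with finite moments of $\log B$ ruling out touching zero.

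The hardest step is the sharp bound $|a(t:B)|\le c$. My plan is to parameterize the two possible values of $M^{(n)}(k+1)$ as $S^{(n)}(k)e^{\pm\sigma/\sqrt n}(1+y^{\pm}_{k+1})$ with $|y^{\pm}_{k+1}|\le c/\sqrt n$, solve the martingale equation for $p_k$ to second order, and expand the conditional expectation of the squared log-increment. A careful expansion produces the identity
\[
nE_n\!\left[(\log M^{(n)}(k+1)-\log M^{(n)}(k))^2\,\big|\,\mathcal F_k\right]=(\sigma+\Lambda_k)^2-(Y_k-\bar Y_k)^2+O(n^{-1/2}),
\]
where $Y_k:=\sqrt n y_k$, $\bar Y_k:=\sqrt n(y^+_{k+1}+y^-_{k+1})/2$ and $\Lambda_k:=\sqrt n(y^+_{k+1}-y^-_{k+1})/2$, each bounded in absolute value by $c$. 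The subtractive term $(Y_k-\bar Y_k)^2$ arises from the deviation $p_k-1/2$ forced by the martingale identity and is precisely what downgrades the naive bound $(\sigma+c)^2$ to the sharp $\sigma^2+2\sigma c$. In the Kusuoka special case the predictability of $\kappa^{(\alpha)}_n$ gives $\bar Y_k=0$ and $|\Lambda_k|=|Y_k|$, collapsing the right-hand side to $\sigma^2+2\sigma\kappa^{(\alpha)}_n(k+1)$ and yielding $|a|\le c$ directly. The general case requires one to exploit the martingale identity to argue that the combination $(\sigma+\Lambda_k)^2-(Y_k-\bar Y_k)^2$ stays in $[\sigma^2-2\sigma c,\sigma^2+2\sigma c]$ up to lower-order corrections -- this intricate quantitative cancellation is the hard core of the argument and is carried out in Proposition~4.27 of \cite{K}. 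Summing over $k\in[\lfloor ns\rfloor,\lfloor nt\rfloor)$ and passing to the limit via $\mathcal W_n(M^{(n)})\Rightarrow B$ then yields $|\langle\log B\rangle_t-\langle\log B\rangle_s-\sigma^2(t-s)|\le 2\sigma c(t-s)$, which is equivalent to $|a(t:B)|\le c$ almost everywhere.
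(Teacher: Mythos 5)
First, note that the paper does not prove this lemma at all: it is stated explicitly \emph{without proof} as a summary of Section 4 of \cite{K}, with the reader referred to Lemma 4.2 and Proposition 4.27 there. So your proposal can only be measured against Kusuoka's argument. Your treatment of the first two assertions is sound and essentially the standard route: writing $M^{(n)}(k+1)/M^{(n)}(k)=1+Z_k$ with $E_n[Z_k\mid\mathcal F_k]=0$ and $|Z_k|\le C/\sqrt n$, Azuma--Hoeffding in log-coordinates gives uniform sub-Gaussian control of $\max_k |\log M^{(n)}(k)|$, hence (\ref{3.3}); conditional-variance bounds plus Burkholder--Davis--Gundy and Kolmogorov's criterion give tightness; and uniform integrability transfers the martingale property and strict positivity to the limit. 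Your second-order identity
$nE_n[(\log M^{(n)}(k+1)-\log M^{(n)}(k))^2\mid\mathcal F_k]=(\sigma+\Lambda_k)^2-(Y_k-\bar Y_k)^2+O(n^{-1/2})$
is also correct: with $u_k=\sigma+Y^+_{k+1}-Y_k$ and $v_k=\sigma+Y_k-Y^-_{k+1}$, the martingale identity gives $p_ku_k=(1-p_k)v_k+O(n^{-1/2})$, whence $p_ku_k^2+(1-p_k)v_k^2=u_kv_k+O(n^{-1/2})=(\sigma+\Lambda_k)^2-(Y_k-\bar Y_k)^2+O(n^{-1/2})$.

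The gap is in the last step, which is the only genuinely hard part of the lemma. Your claim that $(\sigma+\Lambda_k)^2-(Y_k-\bar Y_k)^2$ ``stays in $[\sigma^2-2\sigma c,\sigma^2+2\sigma c]$ up to lower-order corrections'' is false as a pointwise statement. Take $Y_k=0$, $Y^+_{k+1}=c$, $Y^-_{k+1}=-c$ (admissible, and the martingale identity forces $p_k=1/2$): the conditional variance density is then $(\sigma+c)^2=\sigma^2+2\sigma c+c^2$. Likewise $\Lambda_k=0$, $Y_k=c$, $\bar Y_k=-c$ gives $\sigma^2-4c^2<\sigma^2-2\sigma c$ once $c>\sigma/2$. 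Hence the bound $|a(t:B)|\le c$ cannot be obtained by estimating the one-step conditional variance and summing; it holds only after time-averaging, because the realized value $Y_{k+1}\in\{Y^+_{k+1},Y^-_{k+1}\}$ couples consecutive steps: once $Y$ has been pushed to $\pm c$ to produce the excess $c^2$, the next step's variance density drops to at most $\sigma(\sigma+2c)$, and the excess cannot be sustained over a macroscopic time interval. Capturing this cancellation (Kusuoka does it through explicit super- and submartingale comparisons) is exactly the content of Proposition 4.27 of \cite{K}; citing it is legitimate, but the route you sketch toward it --- a pointwise conditional-variance estimate --- would fail, so the proposal both asserts a false inequality and outsources the one step that actually needs proving.
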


\section{Auxiliary lemmas}
\label{sec3}\setcounter{equation}{0}
In this section, we prove several results that are
used in the proof of our convergence result.
Lemmas \ref{randomization}-\ref{density} are
related to the optimal control (\ref{2.15}).
The first result, Lemma \ref{convergence}
is related to the properties of a sequence
 discrete time martingales $M^{(n)}$.
Motivated by (\ref{5.3}) and Lemma \ref{tightness},
we assume that these martingales
are sufficiently
close to the price process $S^{(n)}$.
Then, in Lemma \ref{convergence} below,
we prove that the process $\alpha_n$, defined below,
converges weakly.
The structure that we outline below is very similar to the
one constructed in Theorem \ref{Kusuoka}.  However,
below the martingales $M^{(n)}$ are
given while in the previous section they are constructed.

This limit theorem is the main tool in the proof of the upper bound
of Theorem \ref{thm2.2}.

Let  $(\Omega,\mathcal{F}_n)$
be the discrete  probability structure
given in Section \ref{sec:2}.  For
a probability measure $P_n$ $(\Omega,\mathcal{F}_n)$
and $ k\leq n$, set
\begin{eqnarray*}
M^{(n)}(k)& :=& E_n(S^{(n)}(n)|\mathcal{F}_k), \\
\alpha_n(k)&:=& \frac{\sqrt n \xi_k (M^{(n)}(k)-S^{(n)}(k))}{S^{(n)}(k)}.
\end{eqnarray*}

Suppose that there exists a constant $c>0$ such that
for any $k\leq n$,
\begin{equation}
\label{3.5}
|\alpha_n(k)|\leq c, \ \ P_n \ \ \mbox{a.s.}
\end{equation}
Let ${Q}_n$ be the distribution of $\mathcal{W}_n(S^{(n)})$
under the measure $P_n$.
Then, by Lemma \ref{tightness}
this sequence is tight.
Without loss of generality we assume that
the whole sequence $\{Q_n\}_{n=1}^\infty$ converges to
a probability measure $Q$ on $\mathcal{C}[0,1]$.
Moreover, under $Q$ the canonical map $B$ is a strictly
positive martingale.
Then,
Lemma \ref{tightness} also implies that the process
$A(\cdot:B)$ given in (\ref{e.A}) is well defined.
The next lemma proves the convergence of
the process $\alpha_n$ as well.

\begin{lem}\label{convergence}
Assume {\rm{(}}\ref{3.5}{\rm{)}}.
Let $\hat{Q}_n$ be the joint distribution
$\mathcal{W}_n(S^{(n)})$ and $\int_{0}^{t}\alpha_n([nu])du$ under $P_n$.
Then,
$$
\hat{Q}_n\Rightarrow\hat{Q} \ \mbox{on} \ \mbox{the} \
 \mbox{space} \ \mathcal{C}[0,1]\times\mathcal{C}[0,1]
$$
where $\hat{Q}$ is the
joint distribution of the canonical process $B$ and
$A(\cdot:B)$ under $Q$.
\end{lem}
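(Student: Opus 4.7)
The plan is to establish joint tightness of $\{\hat Q_n\}$ on $\mathcal{C}[0,1]\times\mathcal{C}[0,1]$ and then to identify any subsequential limit with the joint law of $(B,A(\cdot:B))$. The first marginal of $\hat Q_n$ is tight by Lemma~\ref{tightness}. For the second, note that $A_n(t):=\int_0^t\alpha_n([nu])\,du$ satisfies $A_n(0)=0$ and $|A_n(t)-A_n(s)|\le c|t-s|$ by the bound (\ref{3.5}); Arzel\`a-Ascoli then gives tightness on $\mathcal{C}[0,1]$, and joint tightness of $\hat Q_n$ follows.

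Along any weakly convergent subsequence $\hat Q_{n_k}\Rightarrow\hat Q^*$, the Skorohod representation theorem produces a space $(\tilde\Omega,\tilde F,\tilde P)$ on which $(\mathcal{W}_n(S^{(n)}),A_n)\to(B,A^*)$ almost surely and uniformly. By Lemma~\ref{tightness}, $B$ is a strictly positive continuous martingale (in the natural augmented filtration) with $|a(\cdot:B)|\le c$, and since $|M^{(n)}-S^{(n)}|\le cS^{(n)}/\sqrt{n}$ we also have $\mathcal{W}_n(M^{(n)})\to B$ uniformly. It suffices to establish $\langle\ln B\rangle(t)=\sigma^2 t+2\sigma A^*(t)$ almost surely, which by the defining relation of $A(\cdot:B)$ in (\ref{e.A}) is equivalent to $A^*=A(\cdot:B)$.

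For this identification I compute the discrete quadratic variation of $\ln M^{(n)}$. Using $M^{(n)}(k)/S^{(n)}(k)=1+\alpha_n(k)\xi_k/\sqrt{n}$ and $\ln(1+x)=x-x^2/2+O(x^3)$ uniformly on $|x|\le c/\sqrt{n}$, one obtains
\[
\ln M^{(n)}(k+1)-\ln M^{(n)}(k)=\frac{(\sigma+\alpha_n(k+1))\xi_{k+1}-\alpha_n(k)\xi_k}{\sqrt{n}}+O(n^{-1}).
\]
Squaring, using $\xi_j^2=1$, and summing over $k<[nt]$, the cross terms of the form $(\sigma+\alpha_n(k+1))\alpha_n(k)\xi_k\xi_{k+1}$ must be handled via the martingale constraint
\[
p_k(\sigma+g_+^{(k+1)})-q_k(\sigma+g_-^{(k+1)})=\alpha_n(k)\xi_k,
\]
where $g_\pm^{(k+1)}$ denote the two $\mathcal F_k$-measurable values of $\alpha_n(k+1)$ corresponding to $\xi_{k+1}=\pm 1$. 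After cancellation one arrives at
\[
\sum_{k<[nt]}\bigl(\ln M^{(n)}(k+1)-\ln M^{(n)}(k)\bigr)^2=\sigma^2\frac{[nt]}{n}+2\sigma A_n\!\left(\frac{[nt]}{n}\right)+o_{P_n}(1).
\]
Standard convergence of discrete quadratic variations of martingales to those of their continuous limits, aided by $\mathcal{W}_n(M^{(n)})\to B$ uniformly and the moment estimate (\ref{3.3}), gives that the left-hand side converges in probability to $\langle\ln B\rangle(t)$. Combining the two displays yields $\langle\ln B\rangle(t)=\sigma^2 t+2\sigma A^*(t)$, i.e.\ $A^*=A(\cdot:B)$. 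Since the limit is independent of the extracted subsequence, $\hat Q_n\Rightarrow \hat Q$.

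The main obstacle is establishing the collapsing identity for the discrete quadratic variation. The cross-term sum is not negligible because $E_n[\xi_{k+1}\mid\mathcal F_k]=2p_k-1$ is generally $O(1)$; it must be shown to combine with the diagonal contributions to produce exactly $2\sigma A_n$, by an algebraic cancellation mediated by the martingale equation above. This in turn requires Taylor expansions that are uniform in $k$ and $n$ (supplied by the bound $|\alpha_n|\le c$), together with upgrading the resulting pointwise estimates to $L^1$ convergence via the uniform integrability afforded by Lemma~\ref{tightness}.
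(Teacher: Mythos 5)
Your strategy is viable, but the identification step is genuinely different from the paper's. Both arguments share the tightness input (Lemma \ref{tightness} together with the uniform Lipschitz bound $|A_n(t)-A_n(s)|\le c|t-s|$) and the Skorohod representation, and both must ultimately express $A^*$ through $\langle\ln B\rangle$. You do this at second order, by computing the discrete quadratic variation of $\ln M^{(n)}$ and showing it collapses to $\sigma^2 t+2\sigma A_n(t)+o_{P_n}(1)$. The paper works at first order: it introduces $Y_n(t)=\sum_{j\le[nt]}\bigl(M^{(n)}(j)-M^{(n)}(j-1)\bigr)/S^{(n)}(j-1)$, notes that in the Taylor expansion of its increments the $\alpha_n$-terms telescope, so that $\sigma A_n(t)\approx Y_n(t)-\ln\bigl(S^{(n)}([nt])/s_0\bigr)-\sigma^2t/2$, invokes Theorem 4.3 of \cite{DP} to obtain the joint convergence of $(1/S^{(n)},M^{(n)},Y_n)$ to $(1/B,B,\int dB/B)$, and then reads off $\langle\ln B\rangle=\langle Y\rangle$ from the exponential-martingale identity $\ln(M/s_0)=Y-\tfrac12\langle Y\rangle$ in the limit; no quadratic variation is ever computed on the discrete side. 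Two points in your route need to be made rigorous. First, the cross-term cancellation you flag does go through: conditioning on $\mathcal F_k$ and using the martingale constraint (which holds only up to an $O(n^{-1/2})$ error coming from the factors $e^{\pm\sigma/\sqrt n}$) turns the residual $\tfrac1n\sum_k\bigl[\alpha_n(k+1)^2+\alpha_n(k)^2-2(\sigma+\alpha_n(k+1))\alpha_n(k)\xi_k\xi_{k+1}\bigr]$ into a telescoping sum plus a martingale with $O(1/n)$ increments, hence $o_{P_n}(1)$ --- but this is a genuine computation that must be written out. Second, ``standard convergence of discrete quadratic variations'' does not follow from uniform convergence of the paths alone; it requires the good-sequence/P-UT property of the martingales $M^{(n)}$ (Jacod--Shiryaev, Ch.~VI, or the Duffie--Protter framework), which holds here thanks to the moment bound (\ref{3.3}) and the vanishing maximal increment, but must be cited explicitly --- this is precisely the role that Theorem 4.3 of \cite{DP} plays in the paper's proof, so you have not avoided that input, only repackaged it. What your version buys is that it dispenses with the auxiliary integrator $Y_n$ and the triple convergence on the Skorohod space.
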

\begin{proof}
Hypothesis (\ref{3.5}) imply that Lemma \ref{tightness}
apply to the sequence $P_n$.  Hence under this sequence
of measures the estimate (\ref{3.3}) holds.

Let $Y_n$ be a piecewise constant process defined by
\begin{equation}
\label{3.9}
\begin{split}
Y_n(t)=\sum_{j=1}^{[nt]} \frac{M^{(n)}(j)-M^{(n)}(j-1)}{S^{(n)}(j-1)}, \ \ t\in [0,1],
\end{split}
\end{equation}
with $Y_n(t)=0$ it $t<\frac{1}{n}$.
In view of  (\ref{3.5}), there exists a constant $c_1$ such that for any
$k<n$,
$$
\left|M^{(n)}(k+1)-M^{(n)}(k)\right |\leq \frac{c_1}{\sqrt n} S^{(n)}(k),
\qquad P_n{\mbox{-a.s.}}
$$
We use this together with (\ref{3.3}) to arrive at
\begin{equation}
\label{3.11}
\lim_{n\rightarrow\infty}E_n(\max_{1\leq k\leq n}|M^{(n)}(k)-M^{(n)}(k-1)|)=0.
\end{equation}
Let $\mathcal{D}[0,1]$ be the space of all
$c\grave{a}dl\grave{a}g$ functions equipped with the Skorohod topology (see \cite{B}).
Let $\hat{P}_n$ be the distribution
on the space $\mathcal{D}[0,1]\times\mathcal{D}[0,1]$,
of the piecewise constant process $\{(1/S^{(n)}([nt]),M^{(n)}([nt]))\}_{t=0}^1$ under the measure $P_n$.
We use (\ref{3.5}) and Lemma \ref{tightness},
to conclude that
\begin{equation}
\label{3.11+}
\hat{P}_n\Rightarrow \hat{P}  \ \mbox{on}\  \mbox{the} \ \mbox{space} \ \mathcal{D}[0,1]\times\mathcal{D}[0,1],
\end{equation}
where the measure $\hat{P}$ is the distribution of the process
$(1/B,B)$ under $Q$.
In fact, for this convergence we extend the
definition of  $B$ so that it is still  the canonical process on the space
$\mathcal{D}[0,1]$ and  the measure $Q$
is extended as a probability measure on $\mathcal{D}[0,1]$.

Since the canonical process $B$ is a
strictly positive continuous martingale under $Q$,
we apply Theorem 4.3 of \cite{DP} and use
(\ref{3.11}), (\ref{3.11+}).  The result
is the following convergence,
$$
\hat{\mathbb{Q}}_n\Rightarrow \hat{\mathbb{Q}}  \ \mbox{on}
\ \mbox{the} \mbox{space} \ \mathcal{D}[0,1]\times\mathcal{D}[0,1]\times\mathcal{D}[0,1],
$$
where $\hat{\mathbb{Q}}_n$ is the distribution of the triple
$\{(1/S^{(n)}([nt]), M^{(n)}([nt]),Y_n([nt]))\}_{t=0}^1$ under $P_n$, and $\hat{\mathbb{Q}}$
is the distribution of the triple $\left\{\left({1}/{B(t)},B(t),\int_{0}^t{dB(u)}/{B(u)}\right)\right\}_{t=0}^1$,
under the measure $Q$.

In view of the Skorohod representation theorem,
without loss of generality, we may assume that there exists a
probability space $(\tilde\Omega,\tilde{F},\tilde{P})$
and  a strictly positive continuous martingale $M$ such that
$$
\left\{\left(\frac{1}{S^{(n)}([nt])},M^{(n)}([nt]),Y_n([nt])\right)\right\}_{t=0}^1\rightarrow
\left\{\left(\frac{1}{M(t)},M(t),\int_{0}^t\frac{dM(u)}{M(u)}\right)\right\}_{t=0}^1 \ \ \tilde{P}\mbox{-a.s.}
$$
on the space $\mathcal{D}[0,1]\times \mathcal{D}[0,1]\times \mathcal{D}[0,1]$.

Now set
$Y(t)=\int_{0}^t {dM(u)}/{M(u)}$ so that
$dM=MdY$.  Therefore,
$$
M(t)=M(0)\exp\left(Y(t)-\frac{\langle Y\rangle(t)}{2}\right)\qquad
\Rightarrow \qquad
\langle \ln M\rangle(t)= \langle Y\rangle(t).
$$
Hence to complete the proof of the Lemma, it is sufficient to show that
$$
\left\{\int_{0}^t\alpha_n([nu])du\right\}_{t=0}^1\rightarrow
\left\{\frac{\langle Y\rangle(t)-\sigma^2t}{2\sigma}\right\}_{t=0}^1
 \ \
\tilde{P}\mbox{-a.s.} \ \ \mbox{on} \ \mbox{the} \ \mbox{space} \ \mathcal{D}[0,T].
$$

From (\ref{2.4}) and and the definition of $\alpha_n$, we have
$$
M^{(n)}(k)=S^{(n)}(k)(1+\xi_k\alpha_n(k)n^{-1/2})
=S^{(n)}(k-1)\exp(\sigma\xi_k n^{-1/2})(1+\xi_k\alpha_n(k)n^{-1/2}).
$$
Then, by
Taylor expansion there exists a constant $c_2$ such that for any $1\leq j\leq n$
$$
\left|\frac{M^{(n)}(j)-M^{(n)}(j-1)}{S^{(n)}(j-1)}-\frac{1}{\sqrt n}
\left((\sigma+\alpha_n(j))\xi_j-\alpha_n(j-1)\xi_{j-1}\right)-
\frac{\sigma}{2n}\left(\sigma + 2 \alpha_n(j) \right)
\right|\leq \frac{c_2}{n^{3/2}}, \ \ \mbox{a.s.}
$$
This together with (\ref{3.9})
yields
that
for any $n\in\mathbb{N}$ and
$t\in [0,1]$
$$
\bigg|Y_n(t)-\frac{\sigma}{\sqrt n}
\sum_{j=1}^{[nt]}\xi_j-
\frac{\sigma }{2n}\big(\sigma [nt]+2\sum_{j=1}^{[nt]}\alpha_n(j)\big)
\bigg|\leq \frac{c_3}{\sqrt n}, \ \mbox{a.s.}\
$$
for some constant $c_3$.
Since $\frac{\sigma}{\sqrt n}\sum_{j=1}^k\xi_j=\ln(S^{(n)}(k)/s_0)$
the above calculations imply that
$$
\int_{0}^t\alpha_{n}([nu])du\ \rightarrow\  \frac{1}{\sigma}\left(Y(t)-
\ln (M(t)/s_0)-\frac{\sigma^2t}{2}\right) = \frac{\langle Y\rangle(t)-\sigma^2t}{2\sigma},\qquad
\tilde{P}\mbox{-a.s.}
$$
\end{proof}

Next, let $c>0$ be a constant and let $M$ be a
strictly positive, continuous martingale
defined on some probability space $(\tilde\Omega,\tilde{\mathcal{F}},\tilde P)$
satisfying the following conditions
\begin{equation}\label{3.20-}
M(0)=s_0 \ \ \mbox{and} \ \
\left|a(t:M)\right| \le c \ \ \mathcal{L}\otimes\tilde{P} \ \ \mbox{a.s.}
\end{equation}
In fact, a volatility process $\alpha \in \mathcal{A}^c$ if and only
if the corresponding process $S_\alpha$ satisfies the
above condition.  However, $S_\alpha$
is defined on the canonical
space $(\Omega_W, \mathcal{F}^{W}, \mathbb{P}^{W})$
and $M$ is defined on a general
space.
In the next lemma,
we show that maximization of the
function $J(M)$ defined in (\ref{3.17++})
over all martingale $M$'s
satisfying the constraint (\ref{3.20-})
is the same as maximizing $J(S_\alpha)$
over $\alpha \in \mathcal{A}^c$.
The proof follows the ideas of
Lemma 5.2 in \cite{K} and uses the
randomization technique.

\begin{lem}
\label{randomization}
Let $M$ be a strictly positive, continuous martingale on
$(\tilde\Omega,\tilde{\mathcal{F}},\tilde P)$
satisfying
{\rm{(}}\ref{3.20-}{\rm{)}}.  Then,
$$
J(M)\leq\sup_{\alpha\in\mathcal{A}^c}J(S_{\alpha}).
$$
\end{lem}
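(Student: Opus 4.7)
The plan is to realize the martingale $M$ on the canonical Brownian space in such a way that it becomes an $S_\alpha$ for some admissible $\alpha \in \mathcal{A}^c$, and that this identification preserves both the distribution of $M$ and the quadratic variation density process $a(\cdot:M)$. Since $J$ depends only on the joint law of $(M, a(\cdot:M))$, such a representation gives $J(M) = J(S_\alpha) \le \sup_{\alpha \in \mathcal A^c} J(S_\alpha)$.

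First I would use the martingale representation theorem to write $M$ as a stochastic exponential. Because $M$ is a strictly positive continuous martingale with $\tfrac{d\langle \ln M\rangle}{dt} = \alpha_0^2(\cdot)$ absolutely continuous and bounded (by (3.20-)), the process $N(t) := \ln(M(t)/s_0) + \tfrac{1}{2}\int_0^t \alpha_0^2(u)\,du$ is a continuous local martingale with quadratic variation $\int_0^t \alpha_0^2(u)\,du$. On the original space $(\tilde\Omega,\tilde{\mathcal{F}},\tilde P)$, a Brownian motion driving $N$ need not exist when $\alpha_0$ vanishes on a set of positive measure; this is where the \emph{randomization} step enters. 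Enlarge the space to $\tilde\Omega \times \Omega'$ supporting an independent Brownian motion $W'$, and define
\begin{equation*}
\hat B(t) := \int_0^t \mathbb{I}_{\{\alpha_0(u)>0\}}\,\frac{dN(u)}{\alpha_0(u)}
 + \int_0^t \mathbb{I}_{\{\alpha_0(u)=0\}}\,dW'(u).
\end{equation*}
By Lévy's characterization, $\hat B$ is a Brownian motion on the enlarged space, and one verifies that $N(t) = \int_0^t \alpha_0(u)\,d\hat B(u)$, hence
\begin{equation*}
M(t) = s_0 \exp\left(\int_0^t \alpha_0(u)\,d\hat B(u) - \tfrac{1}{2}\int_0^t \alpha_0^2(u)\,du\right).
\end{equation*}

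Next I would pass to the canonical Brownian space. Since $\alpha_0$ is progressively measurable with respect to the natural filtration of $\hat B$ (being an a.s. defined function of $M$, hence measurable with respect to $N$ which is a deterministic stochastic integral of $\hat B$), there exists a progressively measurable functional $\Phi$ on $\mathcal{C}[0,1]$ such that $\alpha_0(t,\cdot) = \Phi(t, \hat B(\cdot))$. Define $\alpha(t,\omega) := \Phi(t, W(\omega))$ on $(\Omega_W,\mathcal{F}^W,\mathbb{P}^W)$. Then $\alpha$ is progressively measurable with respect to $\mathcal{F}^W$, and the pair $(S_\alpha, \alpha)$ under $\mathbb{P}^W$ has the same law as $(M, \alpha_0)$ under the enlarged measure. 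The bound (3.20-) reads $|(\alpha_0^2-\sigma^2)/(2\sigma)| \le c$, which transports to $|a(\cdot:S_\alpha)| = |(\alpha^2-\sigma^2)/(2\sigma)| \le c$ a.s., so $\alpha \in \mathcal{A}^c$.

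Finally, because $F$ is a continuous functional on $\mathcal{C}[0,1]$, $\widehat{G}$ is a continuous deterministic function, and $a(\cdot:S_\alpha) = a(\cdot:M)$ in law jointly with the path, we conclude $J(S_\alpha) = J(M)$, which yields the claimed inequality. The main obstacle is the first step: the martingale representation is not immediate on an arbitrary $(\tilde\Omega,\tilde{\mathcal{F}},\tilde P)$ because the driving Brownian motion may not be carried by this space, and moreover when $\alpha_0$ can vanish one cannot simply set $d\hat B = dN/\alpha_0$. The randomization with an independent $W'$ is precisely what resolves both difficulties simultaneously, while ensuring that $\alpha_0$ remains adapted to the filtration of the constructed $\hat B$ so that the resulting $\alpha$ on the canonical space is genuinely admissible.
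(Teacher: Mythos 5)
Your argument has a genuine gap at its central step: the claim that the volatility $\alpha_0$ is progressively measurable with respect to the natural filtration of the constructed Brownian motion $\hat B$. The justification you give ("$\alpha_0$ is an a.s.\ defined function of $M$, hence measurable with respect to $N$, which is a stochastic integral of $\hat B$") is circular: $N=\int \alpha_0\,d\hat B$ is adapted to $\hat B$ only if $\alpha_0$ already is, and in general it is not. A concrete counterexample: on a space carrying a Brownian motion $B$ and an independent coin flip $\epsilon\in\{-1,1\}$, let $\alpha_0\equiv\sigma+\epsilon\delta$ for small $\delta>0$ and $M(t)=s_0\exp(\alpha_0 B(t)-\tfrac12\alpha_0^2 t)$. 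This is a strictly positive continuous martingale satisfying (\ref{3.20-}), your construction yields $\hat B=B$, and $a(\cdot:M)=(\alpha_0^2-\sigma^2)/(2\sigma)$ is a nondegenerate $\mathcal F_0$-random variable that is not measurable with respect to $\sigma(B)$ at time $0$. Hence no functional $\Phi(t,\hat B)$ can represent $\alpha_0$, and the transfer to the canonical space breaks down. Your use of randomization (the independent $W'$ to handle $\{\alpha_0=0\}$) addresses only a minor degeneracy; the essential difficulty of the lemma is precisely the extra randomness in the volatility that is \emph{not} generated by the driving Brownian motion, and that is what your proof assumes away.

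The paper's proof is built around exactly this obstacle. It first perturbs $M$ by an independent Brownian motion ($Y_\lambda=\sqrt{1-\lambda}\,Y+\sigma\sqrt{\lambda}\,\hat W$) to bound the quadratic variation density away from zero, then discretizes the volatility in time ($\kappa_n(k)$ constant on each interval $[k/n,(k+1)/n)$), and then performs the true randomization: using regular conditional distributions of $\kappa_n(k)$ given $(\tilde W,\kappa_n(0),\dots,\kappa_n(k-1))$, it realizes the pair (Brownian motion, volatility) on the canonical space enlarged by i.i.d.\ uniforms $\Xi_1,\dots,\Xi_{n-1}$ independent of $W$. For each frozen value $z$ of the uniforms the resulting control $U^{(z)}$ \emph{is} Brownian-adapted and lies in $\mathcal A^c$, and Fubini gives $J(M^{(n)})=\int J(S_{U^{(z)}})\,dz\le\sup_\alpha J(S_\alpha)$. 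To repair your proof you would need to supply this averaging-over-external-randomness step (or an equivalent), since the inequality in the lemma is generally not attained by a single Brownian-adapted $\alpha$ reproducing the law of $(M,a(\cdot:M))$.
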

\begin{proof}
Set
$$
Y(t)=\int_{0}^t \frac{dM(u)}{M(u)},\qquad
t\in[0,1],
$$
so that
$$
M(t)=s_0\exp\left(Y(t)-\frac{\langle Y\rangle(t)}{2}\right),
\quad t \in [0,1].
$$
If necessary, by enlarging the space,
we may assume that the probability space
$(\tilde{\Omega},\tilde{F},\tilde{P})$
is rich enough to contain
a Brownian motion $\hat{W}(t)$ which is
independent of $M$. For $\lambda\in [0,1]$
define
$$
Y_{\lambda}=\sqrt{1-\lambda}Y+\sigma\sqrt{\lambda}\hat{W} \ \ \mbox{and} \ \
M_{\lambda}=s_0\exp\left(Y_\lambda-\frac{\langle Y_{\lambda}\rangle}{2}\right).
$$
Notice that for all $\lambda$,
$M_{\lambda}$ satisfies
the conditions of (\ref{3.20-}).
Hence, the family
$$
F(M_{\lambda})-
\int_{0}^1 \widehat{G}\left(t, M_{\lambda},
a(t:M_\lambda) M_{\lambda}(t)\right) dt,\qquad \lambda\in [0,1],
$$
is uniformly integrable, and the continuity of $\widehat{G}$
implies that
$$
J(M)=\lim_{\lambda\rightarrow 0}J(M_{\lambda}).
$$
Hence it suffices to show that
$$
J(M_\lambda)\leq\sup_{\alpha\in\mathcal{A}^c}J(S_{\alpha}),
$$
for all $\lambda>0$.
Since
 $d\langle Y\rangle(t)\geq \lambda\sigma^2 dt$
  for any $\lambda>0$,
without loss of generality we may assume that
$$
Z(t):=\frac{d\langle Y\rangle}{dt}\geq \epsilon,\qquad
\mathcal{L}\otimes\tilde{P}\mbox{-a.s.}
$$
for some $\epsilon>0$. Set,
\begin{eqnarray}\label{3.26}
\tilde{W}(t)&=&\int_{0}^t \frac{d Y(u)}{\sqrt Z(u)}, \ \ t\in [0,1],\\
\kappa_n(0)& =& \sigma \quad \ \mbox{and} \quad
\kappa_n(k)=n\int_{(k-1)/n}^{k/n}\sqrt{Z(u)}du \quad \mbox{for} \quad 0<k<n,
\nonumber\\
M^{(n)}(t)&=& s_0\exp\left(\int_{0}^t\kappa_n([nu])d\tilde{W}(u)-
\frac{1}{2}\int_{0}^t \kappa^2_n([nu])du\right), \ \ t\in[0,1], \ n\in\mathbb{N}.\nonumber
\end{eqnarray}
By the Levy's theorem, $\tilde{W}$ is
a Brownian motion with respect to the usual filtration of $M$.
Therefore,
the martingale $M^{(n)}$ satisfies  (\ref{3.20-}).
Also, from (\ref{3.26}) it is clear that
$$
\lim_{n\rightarrow\infty}\kappa_n([nt])=\sqrt {Z(t})
$$
in probability with the measure $\mathcal{L}\otimes\tilde{P}$.
On the other hand,
Ito's isometry and the Doob-Kolmogorov inequality, imply that
$$
\lim_{n\rightarrow\infty}\max_{0\leq t\leq 1}\left|\int_{0}^t\kappa_n([nu])d\tilde{W}(u)-
Y(t)\right|=0
$$
in probability with respect to $\tilde{P}$.
We use these convergence results and
the uniform integrability, to conclude that
$$
J(M)=\lim_{n\rightarrow\infty}J(M^{(n)}).
$$
Hence, it suffices to prove the following
for any $n\in\mathbb{N}$
\begin{eqnarray}\label{3.29}
J(M^{(n)})\leq\sup_{\alpha\in \mathcal{A}^C}J(S_{\alpha}).
\end{eqnarray}

We  prove the above inequality by the randomization technique.
Fix $n\in\mathbb{N}$. From the existence of the regular distribution function
(for details see \cite{S} page 227),
for any $1\leq k<n$
there exists a function
$\rho_k:\mathbb{R}\times\mathcal{C}[0,1]\times \mathbb{R}^{k}\rightarrow [0,1]$
such that for any
$y$, $\rho_k(y,\cdot):\mathcal{C}[0,1]\times \mathbb{R}^{k}\rightarrow [0,1]$
is measurable and satisfies
$$
\tilde{E}\big(\kappa_n(k)\leq y\big|\sigma\{\tilde{W},\kappa_n(0),...,\kappa_n(k-1)\}\big)
=\rho_k(y,\tilde{W},\kappa_n(0),...,\kappa_n(k-1)),
\ \ \tilde{P} \ \ \mbox{a.s.}
$$
Furthermore, $\tilde{P}$ almost surely,
$\rho_k(\cdot,\tilde{W},\kappa_n(0),...,\kappa_n(k-1))$
is a distribution function on $\mathbb{R}$.
Let $W$ be the Brownian motion in our
canonical space
$(\Omega_W,\mathcal{F}^W,P^{W})$.
We extend this space so that
 it contains a sequence $\Xi_1,...,\Xi_{n-1}$ of i.i.d.~random variables
which are uniformly distributed on the interval $(0,1)$ and independent of
$W$. Let $(\tilde{\Omega}_W,\tilde{F}^W,\tilde{P}^W$)
be the extended probability space.  We assume that its complete.

Next, we recursively define the random variables
\begin{equation}\label{3.31}
U_0=\sigma  \ \ \mbox{and} \  \mbox{for} \  1\leq k< n \ \
U_k=\sup\{y|\rho_k(y,{W},U_1,...,U_{k-1})<\Xi_k\}.
\end{equation}
In view of the properties of the functions $\rho_i$,
we can show that $U_1,...,U_{n-1}$ are measurable.
Furthermore $U_i$ is independent of
$\Xi_{k}$ for any $i<k$. This property together with (\ref{3.31}) yields
that for any $y\in\mathbb{R}$ and $1\leq k<n$,
\begin{eqnarray*}
\tilde{P}^W\big(U_k\leq y\big|\sigma\{W, U_0,...,U_{k-1}\}\big)&=&
\tilde{P}^W\big(\rho_k(y,W,U_0,...,U_{k-1})\geq\Xi_k\big|\sigma\{W, U_0,...,U_{k-1}\}\big)\\
&=& \rho_k(y,W,U_0,...,U_{k-1}).
\end{eqnarray*}
Thus we conclude that
$(W,U_0,...,U_{n-1})$ has the same distribution
as  $(\tilde{W},\kappa_n(0),...,\kappa_n(n-1))$.
Also note that for any $k$
and $t \ge k/n$, $\kappa_n(k)$ is independent of
$\left(\tilde{W}(t)-\tilde{W}(k/n)\right)$.
Furthermore, since for any $k$, $\kappa_n(k)$ takes on values in
the interval $[\sqrt{0\vee \sigma(\sigma-2c)},\sqrt{\sigma(\sigma+2c)}]$,
for $1\leq k<n$ there exist functions
$$
\Theta_k:\mathcal{C}[0,k/n]\times(0,1)^k\rightarrow
[\sqrt{0\vee \sigma(\sigma-2c)},\sqrt{\sigma(\sigma+2c)}],
$$
satisfying
$$
U_k=\Theta_k(W,\Xi_1,...,\Xi_k), \ \ 1\leq k<n
$$
where in the expression above we consider the restriction of $W$ to the interval $[0,k/n]$.
Next we introduce the martingale
$$
S_U(t):=s_0\exp\left(\sum_{i=0}^{[nt]} \left(U_i\left(W\left(\frac{i+1}{n}\right)-
W\left(\frac{i}{n}\right)\right)-\frac{U^2_n(i)}{2n}\right)\right), \ \ t\in [0,1].
$$
Finally, for any $z:=(z_1,...,z_{n-1})\in (0,1)^{n-1}$ define
a stochastic process by
$$
U^{(z)}(t)=\sigma  \  \mbox{if} \  t=0 \  \mbox{and} \ \
U^{(z)}(t)=\Theta_{[nt]}(W,z_1,...,z_{[nt]})   \ \mbox{for}  \ t\in (0,1].
$$
Observe that for any $z\in (0,1)^{n-1}$,
the stochastic process $U^{(z)}\in \mathcal{A}^c$.
We now use the Fubini's theorem to conclude that
\begin{equation}\label{3.35}
J(M^{(n)})=J(S)=\int_{z\in (0,1)^n}J(S_{U^{(z)}})dz_1...dz_n \leq
\sup_{\alpha\in \mathcal{A}^C}J(S_{\alpha})
\end{equation}
and (\ref{3.29}) follows.
\end{proof}

Our final result
is the density of the subset $\mathcal{L}(c)$
defined in Definition \ref{d.vol}
in $\mathcal{A}^{c}$.
The following result is proved by using standard density arguments.
Since we could not find a direct reference we provide a self contained proof.
\begin{lem}\label{density}
For any $c>0$,
$$
\sup_{\alpha\in\mathcal{A}^c}J(S_{\alpha})=
\sup_{\tilde\alpha\in\mathcal{L}(c)}J(S_{\tilde\alpha}).
$$
\end{lem}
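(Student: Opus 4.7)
The inclusion $\mathcal{L}(c)\subset\mathcal{A}^c$ gives $\sup_{\mathcal{L}(c)}J\le\sup_{\mathcal{A}^c}J$ for free; for the reverse, I plan to fix $\alpha\in\mathcal{A}^c$ and $\eta>0$ and produce $\tilde\alpha\in\mathcal{L}(c)$ with $J(S_{\tilde\alpha})\ge J(S_\alpha)-\eta$ by three successive approximations. The ambient continuity tool is that if $\alpha_n\to\alpha$ in $L^2(\mathcal{L}\otimes\mathbb{P}^W)$ with a uniform bound, then $S_{\alpha_n}\to S_\alpha$ uniformly on $[0,1]$ in every $L^p$ by Burkholder--Davis--Gundy and Doob's maximal inequality, hence $J(S_{\alpha_n})\to J(S_\alpha)$ by dominated convergence, using the polynomial growth (\ref{2.2}) and (\ref{2.13}) of $F$ and $\widehat{G}$ and the quadratic shape of $\widehat{G}$ in its third argument.

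In Stage~1 (interior buffer and terminal freezing), for small $\epsilon,\delta>0$ I would set $\alpha^{\epsilon,\delta}(t):=\sqrt{(1-\epsilon)\alpha(t)^2+\epsilon\sigma^2}$ on $[0,1-\delta]$ and $\alpha^{\epsilon,\delta}(t):=\sigma$ on $(1-\delta,1]$, so that $a(\cdot:S_{\alpha^{\epsilon,\delta}})=(1-\epsilon)\,a(\cdot:S_\alpha)\mathbf{1}_{[0,1-\delta]}$ lies strictly inside the admissible region and $\alpha^{\epsilon,\delta}\equiv\sigma$ near $t=1$; by the continuity tool the $J$-error is below $\eta/3$ for $\epsilon,\delta$ small. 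In Stage~2 (finite-dimensional conditioning), writing $\alpha^{\epsilon,\delta}(t,\omega)=F_t(W(\omega))$ by adaptedness and fixing a grid $t_i=i/N$, I would set $\alpha_N(t):=\mathbb{E}^W[\alpha^{\epsilon,\delta}(t_k)\mid\sigma(W(t_1),\ldots,W(t_k))]$ for $t\in[t_k,t_{k+1})$; the conditional-expectation martingale convergence theorem gives $\alpha_N\to\alpha^{\epsilon,\delta}$ in $L^2$ with the same uniform bound, so the $J$-error is again below $\eta/3$ for $N$ large, and $\alpha_N(t)=h_k^N(W(t_1),\ldots,W(t_k))$ for Borel $h_k^N:\mathbb{R}^k\to\mathbb{R}$.

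For Stage~3 (smooth Lipschitz approximation) I would mollify each $h_k^N$ on $\mathbb{R}^k$ to obtain smooth $h_k^{N,\theta}$, interpolate in $t$ using a one-sided smooth kernel supported in $[-\rho,0]$ (to preserve adaptedness), and post-compose with a smooth truncation $\psi_\epsilon$ onto the strict range $\{x:|(x^2-\sigma^2)/(2\sigma)|\le c-\epsilon c/2\}$ that is the identity well inside it. The resulting functional $\tilde f(t,S)$ depends on $(S(t_i\wedge t))_{i\le N}$, is jointly Lipschitz in $(t,S)$ under $\|\cdot\|_\infty$ (in the $S$-direction because $|S(t_i)-\tilde S(t_i)|\le\|S-\tilde S\|_\infty$), obeys (ii) of Definition~\ref{d.vol} with $\delta(\tilde f)=\min(\delta,\epsilon c/2)$, and is adapted; so $\tilde\alpha:=\tilde f(\cdot,W)\in\mathcal{L}(c)$, and $|J(S_{\tilde\alpha})-J(S_{\alpha_N})|<\eta/3$ for $\theta$ small, which yields the claim.

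The main obstacle will be the joint Lipschitz requirement (iii) of Definition~\ref{d.vol}, which must be uniform over all $S\in\mathcal{C}[0,1]$: a naive definition $h(t,S(t_1\wedge t),\ldots,S(t_N\wedge t))$ creates a $t$-discontinuity as $t$ crosses a grid point, and the Lipschitz-in-$t$ constant would then depend on the modulus of continuity of $S$, which is not uniform. The fix is the one-sided time smoothing in Stage~3, which replaces these jumps by smooth transitions and yields a uniform Lipschitz constant depending only on $N$, $\theta$, and $\rho$, while preserving both adaptedness and the terminal condition~(ii); the final truncation $\psi_\epsilon$ then absorbs any small overshoot outside the strict admissible region, using precisely the buffer created in Stage~1.
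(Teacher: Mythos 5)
Your overall architecture is essentially the paper's: first show that $J(S_{\alpha_n})\to J(S_\alpha)$ whenever $\alpha_n\to\alpha$ in $L^2(\mathcal{L}\otimes\mathbb{P}^W)$ with a uniform bound, then approximate a given $\alpha\in\mathcal{A}^c$ by elements of $\mathcal{L}(c)$ via time-discretization, Lipschitz approximation of the resulting finite-dimensional functionals of $W$, an adaptedness-preserving interpolation in $t$, and a truncation into the strictly admissible range. Your one-sided time mollifier plays exactly the role of the paper's backward-shifted linear interpolation between $\Theta_{[mt]-1}$ and $\Theta_{[mt]}$, and your Stage-1 buffer $\sqrt{(1-\epsilon)\alpha^2+\epsilon\sigma^2}$ together with terminal freezing replaces the paper's clipping of $f_1$ into $[a+\delta,b-\delta]$ and the choice $\Theta_{m-2}\equiv\sigma$. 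You also correctly identified the uniform-Lipschitz-in-$t$ issue as the main obstacle.

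There is, however, one step that fails as written. In Stage 2 you sample a merely progressively measurable process at grid points and claim $\alpha_N(t)=\mathbb{E}^W[\alpha^{\epsilon,\delta}(t_k)\mid\sigma(W(t_1),\dots,W(t_k))]\to\alpha^{\epsilon,\delta}$ in $L^2$. Martingale convergence only controls the conditioning (and even then needs nested $\sigma$-fields, so $N$ should run over powers of two); it does not repair the time-sampling, and the piecewise-constant sampling $t\mapsto\alpha^{\epsilon,\delta}(\lfloor Nt\rfloor/N)$ need not converge in $L^2(\mathcal{L}\otimes\mathbb{P}^W)$ for a general bounded progressively measurable process: already for a deterministic $\alpha(t)=g(t)$ with $g$ the indicator of a fat Cantor set avoiding all rationals, every grid sample vanishes while $g$ does not. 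The standard repair --- and the one the paper uses by first invoking the Karatzas--Shreve approximation of progressively measurable processes by continuous adapted ones --- is to precede the gridding by a time-continuous adapted approximation, or equivalently to sample averaged values $N\int_{t_{k-1}}^{t_k}\alpha^{\epsilon,\delta}(s)\,ds$. A second, softer point: your Stage-1 map outputs a nonnegative process, so it cannot converge to $\alpha$ in $L^2$ when $\alpha<0$ on a set of positive $\mathcal{L}\otimes\mathbb{P}^W$-measure, and Definition \ref{d.vol} only admits nonnegative volatilities; you therefore need a preliminary reduction to $\alpha\ge 0$, i.e., an argument that $J(S_\alpha)=J(S_{|\alpha|})$ (e.g., by L\'evy's theorem, $\int\alpha\,dW=\int|\alpha|\,d\tilde W$ with $\tilde W=\int\operatorname{sgn}(\alpha)\,dW$ a Brownian motion). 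The paper dismisses this with a one-line ``without loss of generality,'' so you are in good company, but your write-up should not skip it either.
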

\begin{proof}
Let ${\{\phi_n\}}_{n=1}^\infty\subset\mathcal{L}(c)$ be a sequence
which converge in probability (with respect to $\mathcal{L}\otimes P^W$) to some
$\alpha\in\mathcal{A}^C$. By the Ito's isometry and the Doob-Kolmogorov inequality, we
directly conclude that
$S_{\phi_n}$ converges to $S_{\alpha}$
in probability on the space $\mathcal{C}[0,1]$. Then, invoking the uniform integrability
once again, we obtain
$\lim_{n\rightarrow\infty}J (S_{\phi_n})=J(S_{\alpha})$.

Therefore to prove the lemma,
for any $\alpha\in\mathcal{A}^c$ we need
to construct  a sequence ${\{\phi_n\}}_{n=1}^\infty\subset\mathcal{L}(c)$
which converges in probability to $\alpha$.
Moreover, by the decomposition $\alpha=\alpha^{+}-\alpha^{-}$,
without loss of generality, we may assume that $\alpha$ is a non negative stochastic process.
Thus, let $\alpha\in\mathcal{A}^c$ be a non negative stochastic process and let $\delta>0$.
It is well known (see \cite{KS})
that there exists a continuous processes $\phi$
adapted to the Brownian filtration, satisfying
\begin{equation}\label{3.44}
\mathcal{L}\otimes P^W\{|\alpha-\phi|>\delta\}<\delta.
\end{equation}
Since the process $\phi$ is continuous,
for all sufficiently large $m$
\begin{equation}\label{3.45}
P^W\big\{\big(\max_{0\leq k\leq m-2}\sup_{k/m\leq t
\leq (k+2)/m}|\phi(t)-\phi(k/m)|\big)>\delta\big\}<\delta.
\end{equation}
Clearly, for any $1\leq k\leq m$ there exists
a measurable function $\Theta_k:\mathcal{C}[0,k/m]\rightarrow\mathbb{R}$
for which
$$
\theta_k(W)=\phi(k/m), \ \ 1\leq k\leq m
$$
where in the expression above we consider the
restriction of $W$ to the interval $[0,k/m]$.
Fix $k$. It is well known (see for instance \cite{B}, Chapter 1)
that we can find a sequence of bounded Lipschitz continuous functions
$\vartheta_n:\mathcal{C}[0,k/m]\rightarrow\mathbb{R}$, $n\in\mathbb{N}$
such that $\lim_{n\rightarrow\infty}\vartheta_n$
$=\theta_k$
a.s. with respect to the Wiener measure on the space
$\mathcal{C}[0,k/m]$.
We conclude that there exists a constant $\mathcal{H}>0$
and a sequence of functions
$\Theta_k:\mathcal{C}[0,1]\rightarrow\mathbb{R}$,
$1\leq k\leq {m-3}$
such that
for any $z_1,z_2\in \mathcal{C}[0,1]$ and $1\leq k\leq {m-3}$
\begin{eqnarray}
\nonumber
&\mbox{i}.& \ \ \Theta_k(z_1)=\Theta_k(z_2) \ \mbox{if}
 \ z_1(s)=z_2(s) \ \ \mbox{for} \  \mbox{any} \ s\leq k/m,\\
 \label{3.47++}
&\mbox{ii}.& \ \ |\Theta_k(z_1)|\leq \mathcal{H},\\
\label{3.47+++}
&\mbox{iii}. &\ \ |\Theta_k(z_1)-\Theta_k(z_2)|\leq \mathcal{H}\big(||z_1-z_2||\big),\\
\label{3.48}
&\mbox{iv}.& \ \ P^W\big\{|\Theta_k(W)-\phi(k/m)|>\delta\big\}<\delta/m.
\end{eqnarray}
Let $\Theta_{-1},\Theta_0,\Theta_{m-2}:\mathcal{C}[0,1]\rightarrow\mathbb{R}$
be given by $\Theta_{-1}=\Theta_0\equiv \phi(0)$ and $\Theta_{m-2}\equiv\sigma$.
Define $f_1:[0,1]\times\mathcal{C}[0,1]\rightarrow\mathbb{R}$
by
$$
f_1(t,z) =
\left\{
\begin{array}{ll}
([mt]+1-mt)\Theta_{[mt]-1}(z)+(mt-[mt])\Theta_{[mt]}(z) ,
\qquad &{\mbox{if}}\ t < 1-1/m,\\
\sigma, &{\mbox{else.}}
\end{array}
\right.
$$
Denote $a=\sqrt{0\vee \sigma(\sigma-2c)}$ and $b=\sqrt{\sigma(\sigma+2c)}$.
Without loss of generality we assume that $\delta<\min(\sigma-a,b-\sigma)$.
Set,
\begin{equation*}
f(t,z)=((a+\delta)\vee f_1(t,z))\wedge (b-\delta), \ \ t\in [0,1], \ z\in\mathcal{C}[0,1].
\end{equation*}
Using (\ref{3.47++})--(\ref{3.47+++}), we conclude that
for any $0\leq k\leq {m-2}$, $t_1,t_2\in [k/m,(k+1)/m]$ and $z_1,z_2\in\mathcal{C}[0,1]$,
\begin{eqnarray*}
|f(t_2,z_2)-f(t_1,z_1)|&\leq& |f_1(t_2,z_2)-f_1(t_1,z_2)|+|f_1(t_1,z_2)-f_1(t_1,z_1)|\\
&\leq& m|t_1-t_2|(|\Theta_{k-1}(z_2)|+|\Theta_k(z_2)|)+|\Theta_{k-1}(z_2)-\Theta_{k-1}(z_1)|\\
&&+|\Theta_{k}(z_2)-\Theta_{k}(z_1)|\leq 2(\mathcal{H}+\sigma)(m+1)(|t_1-t_2|+||z_1-z_2||).
\end{eqnarray*}
Define the process ${\{\Theta(t)\}}_{t=0}^1$ by
$\Theta(t)=f(t,W)$, $t\in [0,1]$.
By the choice of $\delta$, it follows that
$\Theta\in\mathcal{L}(c)$.
Next, observe that for any $t\in [1/m,1-1/m]$ we have
\begin{equation*}
|\Theta(t)-\phi(t)|\leq \max \left(|\phi(t)-\Theta_{[mt]}(W)|,|\phi(t)-\Theta_{[mt]-1}(W)|\right).
\end{equation*}
Thus for any $t\in [1/m,1-1/m]$
\begin{eqnarray}\label{3.51++}
&|\Theta(t)-\alpha(t)|\leq \left(\max_{0\leq k\leq m-3}\sup_{k/m\leq t\leq(k+2)/m}|\phi(t)-\phi(k/m)|\right)\\
&+\left(\max_{0\leq k\leq m-3}|\phi({k/m})-\Theta_k(W)|\right)+|\alpha(t)-\phi(t)|.\nonumber
\end{eqnarray}
Finally, by combining (\ref{3.44})--(\ref{3.45}), (\ref{3.48}) and (\ref{3.51++}) we get
$$
\mathcal{L}\otimes P^W\{|\Theta-\alpha|>3\delta\}<\frac{2}{m}+3\delta<5\delta.
$$
Since $\delta>0$ was arbitrary small we complete the proof.
\end{proof}

\bibliographystyle{spbasic}

\begin{thebibliography}{}

\bibitem{B} P.~Billingsley, {\em Convergence of Probability Measures,}
Wiley, New York, 1968.

\bibitem{BB} P.~Bank and D.~Baum, {\em Hedging and portfolio optimization in financial
markets with a large trader,} Math. Finance {\bf 14}, 1--18, (2004).


\bibitem{BS} G.~Barles and H.M.~Soner,
{\em Option pricing with transaction costs and a nonlinear Black-Scholes equation,}
Finance and Stochastics, {\bf 2}, 369--397, (1998).

\bibitem{CJ}   P.~Chalasani and S.~Jha, {\em
Randomized stopping times and American option pricing with transaction costs,} Math. Finance.
{\bf 11}, 33--77, (2001).

\bibitem{cer}
U.~\c{C}etin and L.C.G.~Rogers,
 {\em Modelling liquidity effects in discrete time},
Mathematical Finance,  17, 15--29,  (2007).

\bibitem{CJP} U.~\c{C}etin, R.~Jarrow and P.~Protter, {\em Liquidity risk and arbitrage pricing theory,}
 Finance and Stoch. {\bf 8}, 311--341, (2004).

\bibitem{CST} U.~\c{C}etin, H.M.~Soner and N.~Touzi, {\em Option hedging under liquidity costs,}
 Finance and Stochastics. {\bf 14}, 317--341, (2010).

\bibitem{D1} Y.~Dolinsky, {\em Hedging of Game Options With the Presence of Transaction Costs,}
arXiv:1103.1165, (2011).

\bibitem{D} R.M.~Dudley, {\em Distances of Probability Measures and Random Variables,}
Ann. Math. Statist. {\bf 39}, 1563-1572, (1968).

\bibitem{DP} D.~Duffie and P.~Protter, {\em From Discrete to Continuous Time Finance:
Weak Convergence of the Financial Gain Process,}
Mathematical Finance. {\bf 2}, 1--15, (1992).

\bibitem{DS} F.~Delbaen and W.~Schachermayer, {\em A general version of the fundamental theorem
of asset pricing,} Math. Annalen. {\bf 300}, 463--520, (1994).

\bibitem{DNS}
Y.~Dolinsky, M.~Nutz and H.M.~Soner,
{\em Weak Approximation of G-Expectations,} arXiv:1103.0575v1, (2011).

\bibitem{GS} S.~Gokay and H.M.~Soner,
{\em  Liquidity in a Binomial market,} to appear in Mathematical Finance.

\bibitem{K} S.~Kusuoka, {\em Limit Theorem on Option Replication Cost with Transaction Costs,} Ann. Appl. Probab. {\bf 5},
198--221, (1995).

\bibitem{KS} I~Karatzas and S.E.~Shreve, {\em Brownian Motion and Stochastic Calculus,}
Springer, New York, (1991).

\bibitem{LS}
S.~Levental and A.V.~Skorohod, {\em On the possibility of hedging options in the
presence of transaction costs}. Ann. Appl. Probab. {\bf 7(2)}, 410--443, (1997).

\bibitem{P}
S.~Peng.
{\em Multi-dimensional {$G$}-{B}rownian motion and related stochastic
  calculus under {$G$}-expectation},
Stochastic Process. Appl., {\bf 118(12)}, 2223--2253, (2008).


\bibitem{PP} T.~Pennanen and I.~Penner,
{\em Hedging of claims with physical delivery under convex transaction costs.}
SIAM Journal on Financial Mathematics, {\bf 1}, 158--178,  (2010).

\bibitem{R} R.T.~Rockafellar, {\em Convex Analysis,} Princton University Press, Princeton, NJ, (1970).

\bibitem{Sc}
W.~Schachermayer, {\em The fundamental theorem of asset pricing under
proportional transaction costs in finite discrete time,} Math. Finance,
{\bf 14(1)}, 19--48, (2004).

\bibitem{S} A.N.~Shiryaev, {\em Probability,} Springer-Verlag, New York, (1984).

\bibitem{SSC}
H.M.~Soner, S.E.~Shreve and J.~Cvitanic,
{\em There is no nontrivial hedging portfolio for option pricing with transaction costs},
Annals of Applied Prob., {\bf 5/2}, 327Ð355, (1995).

\bibitem{st}
H.M.~Soner and N.~Touzi,
{\em The dynamic programming equation for second order stochastic target problems,}
SIAM Journal on Control and Optimization, {\bf 48/4}, 2344--2365, (2009).


\bibitem{STZ}
H.M.~Soner, N.~Touzi and J.~Zhang
{\em Dual Formulation of Second Order Target Problems},
preprint, 2010.

\end{thebibliography}

\end{document}